\documentclass{amsart}

\usepackage{latexsym}
\usepackage{amsmath}
\usepackage{amsthm}
\usepackage{amssymb}
\usepackage{mathrsfs}
\usepackage{wasysym,color}
\usepackage{graphicx}

\usepackage{amsfonts}

\usepackage{graphicx}%
\topmargin=0cm \oddsidemargin=-0.5cm \evensidemargin=-0.5cm
\textwidth=16cm \textheight=23cm

\newtheorem{theorem}{\indent Theorem}[section]
\newtheorem{proposition}[theorem]{\indent Proposition}
\newtheorem{definition}[theorem]{\indent Definition}
\newtheorem{lemma}[theorem]{\indent Lemma}
\newtheorem{remark}[theorem]{\indent Remark}
\newtheorem{corollary}[theorem]{\indent Corollary}
\newtheorem{observation}[theorem]{\indent Observation}

\numberwithin{equation}{section}

\newcommand{\R}{\mathbb{R}}

\title{Quasi-geostrophic equation in $\mathbb{R}^2$}
\author{Tomasz Dlotko, Maria B. Kania}\thanks{Corresponding author: T. Dlotko, Institute of Mathematics, Silesian University, Telephone: (+48)322582976}
\address[Tomasz Dlotko, Maria B. Kania]{Institute of Mathematics, Silesian University, 40-007 Katowice, Bankowa 14,  Poland}
\email{tdlotko@math.us.edu.pl, mkania@math.us.edu.pl}
\author{Chunyou Sun}
\address[Chunyou Sun]{School of Mathematics and Statistics, Lanzhou University, 730000, Lanzhou, P.R. China}
\email{sunchy@lzu.edu.cn}

\subjclass[2000]{35Q35, 35S10, 35B41.}
\begin{document}

\date{}
\begin{abstract}
Solvability of Cauchy's problem in $\mathbb{R}^2$ for subcritical quasi-geostrophic equation is discussed here in two phase spaces; $L^p(\R^2)$ with $p> \frac{2}{2\alpha-1}$ and $H^s(\R^2)$ with $s>1$. A solution to that equation in {\it critical case} is obtained next as a limit of the $H^s$-solutions to subcritical equations when the exponent $\alpha$ of   $(-\Delta)^\alpha$ tends to $\frac{1}{2}^+$. Such idea seems to be new in the literature. Existence of the global attractor in subcritical case is discussed in the paper. In section \ref{addedsect} we also discuss solvability of the critical problem with Dirichlet boundary condition in bounded domain $\Omega \subset \R^2$, when $\| \theta_0 \|_{L^\infty(\Omega)}$ is small.      
\end{abstract}
\keywords{viscous Quasi-geostrophic equation; global solvability; a priori estimates; asymptotic behavior}
\maketitle

\vspace{-1cm}
\section{Introduction}
The dissipative quasi-geostrophic equation considered here has the form:
\begin{equation}\label{Q-gequation}
\begin{split}
&\theta_t + u \cdot \nabla\theta + \kappa(-\Delta)^\alpha \theta = f, \quad x \in \R^2,\, t>0,  \\
&\theta(0,x) = \theta_0(x),
\end{split}
\end{equation}
where $\theta$ represents the potential temperature, $\kappa > 0$ is a diffusivity coefficient, $\alpha \in [\frac{1}{2}, 1]$ a fractional exponent,
and $u = (u_1,u_2)$ is the {\it velocity field} determined by $\theta$ through the relation:
\begin{equation}\label{1.2}
u = \bigl(-\frac{\partial \psi}{\partial x_2}, \frac{\partial \psi}{\partial x_1}\bigr), \quad  \text{where} \quad (-\Delta)^{\frac{1}{2}} \psi = -\theta,
\end{equation}
or, in a more explicit way,
\begin{equation}\label{1.3}
u = \bigl(- R_2\theta, R_1\theta \bigr),
\end{equation}
where $R_i, i=1,2$ are the {\it Riesz transforms}.

This paper is devoted to the global in time solvability and properties of solutions to the Cauchy problem \eqref{Q-gequation}. There is a large literature devoted to that problem published in the last 20 years; compare \cite{C-C, C-C1, K-N-V, T-M, W1, W2} for more references. The basic approach was to obtain a {\it weak solution} to \eqref{Q-gequation} using the {\it viscosity technique} (e.g. \cite{Li}), that means adding the viscosity term $\epsilon \Delta \theta$ to the right hand side of the equation, solving the regularized problem and  letting $\epsilon \to 0^+$. Our approach is different. We consider first the {\it family of subcritical problems} \eqref{Q-gequation} with $\alpha \in (\frac{1}{2}, 1]$, which can be treated in the framework of \cite{HE, C-D} as semilinear parabolic equations. Thanks to a Maximum Principle valid for \eqref{Q-gequation} (Lemma \ref{MaxPrinc}), we have a {\it uniform in} $\alpha \in (\frac{1}{2}, 1]$ estimates of the solutions to that subcritical problems in $L^p(\R^2), 1\leq p\leq +\infty$. Letting $\alpha \to \frac{1}{2}^+$ over a sequence of regular $H^{2\alpha^-+s}(\R^2), s>1$, solutions $\theta^\alpha$ to \eqref{Q-gequation}, this property allows us to introduce in Theorem \ref{critth} a {\it 'weak $L^p$ solution'} of the limiting {\it critical problem} \eqref{Q-gequation}. Such approach seems to be new in the literature. Our considerations relates most closely to the J. Wu papers \cite{W1, W2}, using however another approach of semilinear parabolic equations with sectorial operators \cite{HE, C-D}. 

\subsection{Description of the results.} The quasi-geostrophic equation \eqref{Q-gequation} considered in this paper is a challenging problem to study; compare \cite{KI, K-N-V, W1, W2}. Probably because of that many papers devoted to it were published only very recently; some of them are listed in the references.    

There are different possible choices of the phase space for that problem. Following the considerations of J. Wu \cite{W1, W2}, we chose $L^p(\R^2)$ with $p$ sufficiently large and $H^s(\R^2)$ with $s>1$ as the {\it base spaces} (in which the equation is fulfilled) for our problem; see also \cite{K-N-V}. Our aim was to include, in a subcritical case of exponent $\alpha\in (\frac{1}{2}, 1]$, the problem \eqref{Q-gequation} into the frame of semilinear parabolic equations with sectorial operator (see \cite{HE, C-D}). This offers a simple but formalized proof of the local solvability and regularity in subcritical case. The presented in section \ref{critical} approach to critical nonlinearity is new here. However, using weak compactness of bounded sets in reflexive Banach space as a tool for getting convergence to a weak solution of the critical problem, we will not be able to show that the limit of the nonlinearities of subcritical problems equals to nonlinearity of the limiting critical problem. The existing uniform in $\alpha \in (\frac{1}{2}, 1]$ a priori estimates are too weak to guarantee such property. However, they work well in case of all the linear components in the equation. The main result obtained in that direction is formulated in Theorem \ref{critth} of section \ref{critical}, where we introduce the notion of the {\it weak $L^p$ solution} to the critical equation \eqref{Q-gequation}. It is followed by four technical observations used in the proof of that theorem; the main technical result there seems to be Lemma \ref{lem6.2}, vital when letting $\alpha\to \frac{1}{2}^+$ in \eqref{Q-gequation}.   

Another result reported in our paper is existence of the global attractors for a variant of subcritical problem (with added linear damping term $\lambda \theta$, needed in case of $\R^2$) in the phase space $H^s(\R^2), s>1$. Asymptotic compactness, the most difficult point there is obtained using the {\it tail estimates technique}; see \cite{WA} as a source reference for that technique. The final section of the paper contains some known but interesting facts used in the main body of the paper. We are using in the paper the technique originated in our recent publications \cite{DL, D-S, D-K-S}. 

The final section \ref{addedsect}, following the earlier considerations of \cite{C-C-W}, is devoted to the global in time solvability of the critical Quasi-geostrophic equation with Dirichlet boundary condition in a bounded domain $\Omega \subset \R^2$. Using technique similar as in earlier sections we prove there the convergence, in a sense of $H^{-\frac{1}{2}}(\Omega)$ space, of the sequence $\theta^\alpha$ of solutions to subcritical problems  to the solution $\theta$ of the limit critical problem.     

{\it Notation.} We are using standard notation for Sobolev spaces and Riesz potentials. For $r \in \mathbb{R}$, let $r^-$ denotes a number strictly less than $r$ but close to it. Similarly, $r^+ >r$ and $r^+$ close to $r$. When needed for clarity of the presentation, we mark the dependence of the solution $\theta$ of \eqref{Q-gequation} on $\alpha \in (\frac{1}{2}, 1]$, calling it $\theta^\alpha$. Same time we do not mark explicitly the dependence of $u$ on $\alpha$, since $u$ always stays next to $\theta^\alpha$.    

\section{Solvability of subcritical \eqref{Q-gequation}, $\alpha \in (\frac{1}{2}, 1]$, in $W^{2\alpha^-,p}(\R^2)$.}\label{locsolv}
\subsection{Formulation of the problem and its local solvability}\label{sub2.2}
Our first task is the local in time solvability of the subcritical problem \eqref{Q-gequation} when equation is treated in the {\it base space} $X:=L^p(\R^2)$. We will use a standard approach of Dan Henry \cite{HE} to semilinear 'parabolic' equations. To work with a {\it sectorial positive} operator (see \cite{HE, C-D}), we will rewrite \eqref{Q-gequation} in an equivalent form:
\begin{equation}\label{2.3}
\begin{split}
&\theta_t + u \cdot \nabla\theta + \kappa(-\Delta)^\alpha \theta + \kappa\theta = f + \kappa\theta, \quad x \in \R^2,\, t>0,  \\
&\theta(0,x) = \theta_0(x).
\end{split}
\end{equation}
Define $A_\alpha := \kappa[(-\Delta)^\alpha +I]$, $\alpha \in (\frac{1}{2}, 1]$, where $(-\Delta)^\alpha$ is the fractional Laplacian. Also, setting
\begin{equation}\label{2.4}
F(\theta) = R_2 \theta \frac{\partial \theta}{\partial x_1} - R_1 \theta \frac{\partial \theta}{\partial x_2} + f + \kappa\theta,
\end{equation}
the problem \eqref{2.3} will be written formally as
\begin{equation}
\begin{split}
&\theta_t + A_\alpha \theta = F(\theta), \  t>0,  \\
&\theta(0) = \theta_0,
\end{split}
\end{equation}
which is an abstract 'parabolic' equation with sectorial positive operator. To assure that the nonlinearity $F$ is bounded and Lipschitz continuous on bounded sets as a map from the {\it phase space} $X^\beta:=W^{2\alpha^-, p}(\R^2)$ to $X$  ($X^\beta$ - domain of the $\beta$ fractional power of the sectorial operator $A_\alpha$), we need to have a sufficiently large value of $p$. More precisely, we assume the following condition known e.g from the references \cite{T-M, W1}:
\begin{equation}\label{asonF}
2\alpha - \frac{2}{p} > 1.
\end{equation}
\begin{theorem}\label{localexistence}
When the condition \eqref{asonF} holds, $f\in L^p(\R^2)$ and $\theta_0 \in W^{2\alpha^-,p}(\R^2)$, then there exists a unique local in time mild solution $\theta(t)$ to the subcritical problem \eqref{Q-gequation} considered on the phase space $W^{2\alpha^-,p}(\R^2)$. Moreover, 
\begin{equation*}
\theta \in C((0,\tau); W^{2\alpha,p}(\R^2)) \cap C([0,\tau); W^{2\alpha^-,p}(\R^2)), \  \theta_t \in C((0,\tau); W^{2\gamma,p}(\R^2)),
\end{equation*}
with arbitrary $\gamma < \alpha^-$. Here $\tau> 0$ is the 'live time' of that local in time solution. Moreover, the {\em Cauchy formula} is satisfied:
\begin{equation*}
\theta(t) = e^{-A_\alpha t} \theta_0 + \int_0^t e^{-A_\alpha(t-s)} F(\theta(s)) ds, \  t \in [0, \tau),
\end{equation*}
where $e^{-A_\alpha t}$ denotes the linear semigroup corresponding to the operator $A_\alpha := \kappa[(-\Delta)^\alpha +I]$ in $L^p(\R^2)$.
\end{theorem}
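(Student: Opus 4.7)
The plan is to cast \eqref{2.3} as an abstract semilinear equation in the sense of Henry \cite{HE} and then invoke the standard local-existence theorem for mild solutions. First I would verify that $A_\alpha = \kappa[(-\Delta)^\alpha + I]$ is a sectorial, positive operator on $X = L^p(\R^2)$. Sectoriality of $(-\Delta)^\alpha$ on $L^p(\R^n)$ (for $1 < p < \infty$) is standard from functional calculus for the Laplacian, and adding the bounded shift $\kappa I$ preserves sectoriality while making the spectrum lie strictly in the right half-plane; thus $A_\alpha$ is positive and sectorial, so fractional powers $A_\alpha^\beta$ and the analytic semigroup $\{e^{-A_\alpha t}\}_{t\ge 0}$ are well defined. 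The fractional-power space $X^\beta = D(A_\alpha^\beta)$ coincides with the Bessel-potential space $W^{2\alpha\beta,p}(\R^2)$, so picking $\beta$ slightly below $1$ identifies $X^\beta$ with $W^{2\alpha^-,p}(\R^2)$, as announced.

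The key technical step, and the only genuine obstacle, is to prove that $F$ is bounded and locally Lipschitz from $X^\beta = W^{2\alpha^-,p}(\R^2)$ into $X = L^p(\R^2)$. Writing the difference in the bilinear part as
\begin{equation*}
R_2\theta\,\partial_1\theta - R_2\eta\,\partial_1\eta = R_2(\theta-\eta)\,\partial_1\theta + R_2\eta\,\partial_1(\theta-\eta),
\end{equation*}
and estimating each term by Hölder's inequality in the form $\|gh\|_{L^p} \le \|g\|_{L^\infty}\|h\|_{L^p}$, I would use (i) the $L^p$-boundedness of Riesz transforms, giving $\|R_i v\|_{W^{2\alpha^-,p}} \le C\|v\|_{W^{2\alpha^-,p}}$, (ii) the Sobolev embedding $W^{2\alpha^-,p}(\R^2) \hookrightarrow L^\infty(\R^2)$, which holds precisely under the hypothesis $2\alpha - \frac{2}{p} > 1$ of \eqref{asonF}, and (iii) the trivial embedding $W^{2\alpha^-,p} \hookrightarrow W^{1,p}$ controlling $\|\partial_j\cdot\|_{L^p}$ (again by $2\alpha^- > 1$). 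These bounds yield $\|F(\theta)-F(\eta)\|_{L^p} \le C(1 + \|\theta\|_{X^\beta} + \|\eta\|_{X^\beta})\|\theta-\eta\|_{X^\beta}$, plus the continuity of the affine part $f+\kappa\theta$; in particular $F$ is bounded and Lipschitz on bounded subsets of $X^\beta$.

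With the abstract hypotheses of \cite[Chapter~3]{HE} (or \cite{C-D}) verified, the local existence and uniqueness of a mild solution
\begin{equation*}
\theta(t) = e^{-A_\alpha t}\theta_0 + \int_0^t e^{-A_\alpha(t-s)} F(\theta(s))\,ds, \quad t\in[0,\tau),
\end{equation*}
follows by the standard contraction-mapping argument on the complete metric space of continuous curves $[0,\tau]\to X^\beta$ staying near $\theta_0$, where $\tau$ is chosen small enough to absorb the Lipschitz constant via the smoothing estimate $\|A_\alpha^\beta e^{-A_\alpha t}\|_{\mathcal{L}(X)} \le C t^{-\beta}$. The regularity claims $\theta \in C((0,\tau);W^{2\alpha,p}) \cap C([0,\tau);W^{2\alpha^-,p})$ and $\theta_t \in C((0,\tau);W^{2\gamma,p})$ then follow from parabolic smoothing of the analytic semigroup applied to the Cauchy formula: the first term $e^{-A_\alpha t}\theta_0$ enters $D(A_\alpha) = W^{2\alpha,p}$ for $t>0$ and converges to $\theta_0$ in $X^\beta$ as $t\to 0^+$, while Hölder continuity of $s\mapsto F(\theta(s))$ combined with the usual $\int_0^t (t-s)^{-1+\epsilon}$-type bounds pushes the Duhamel term into $W^{2\gamma,p}$ for any $\gamma<\alpha^-$, in particular making $\theta_t = -A_\alpha\theta + F(\theta)$ continuous into $W^{2\gamma,p}$ on $(0,\tau)$.
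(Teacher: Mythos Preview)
Your proof is correct and follows the same overall route as the paper: verify that $F:W^{2\alpha^-,p}(\R^2)\to L^p(\R^2)$ is locally Lipschitz, then invoke Henry's abstract local-existence theorem. The only difference is the H\"older splitting in the bilinear estimate: you use the $L^\infty\times L^p$ pairing (controlling $\|R_j v\|_{L^\infty}$ via $\|R_j v\|_{W^{2\alpha^-,p}}\le C\|v\|_{W^{2\alpha^-,p}}$ followed by the Sobolev embedding $W^{2\alpha^-,p}\hookrightarrow L^\infty$), whereas the paper uses the symmetric $L^{2p}\times L^{2p}$ pairing and the embedding $W^{2\alpha^-,p}(\R^2)\subset W^{1,2p}(\R^2)$. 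Both are valid under \eqref{asonF}; your version has the small advantage of being more transparent about why one needs extra regularity on the Riesz-transform factor (since $R_j$ is unbounded on $L^\infty$), while the paper's version stays entirely in the $L^q$ scale where $R_j$ is directly bounded. One minor imprecision: the embedding $W^{2\alpha^-,p}(\R^2)\hookrightarrow L^\infty(\R^2)$ does not hold \emph{precisely} under \eqref{asonF} but under the weaker condition $2\alpha^- > 2/p$; the stated hypothesis is more than sufficient.
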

\begin{proof}
 To guarantee the local solvability (see e.g. \cite{C-D}, p. 55) we need to have the local Lipschitz condition:
\begin{equation}\label{loclip}
\forall_{\theta_1, \theta_2 \in B_{X^\beta}(r)} \exists_{L(r)}  \| F(\theta_1) - F(\theta_2) \|_X \leq L(r) \| \theta_1-\theta_2 \|_{X^\beta},
\end{equation}
where $B_{X^\beta}(r)$ denotes an open ball in $X^\beta$  centered at zero of radius r. 
Using \eqref{1.3} and, for $\theta_1, \theta_2 \in B(r)$ denoting: $u_1=(-R_2 \theta_1, R_1 \theta_1), u_2=(-R_2 \theta_2, R_1 \theta_2)$, we obtain
\begin{equation}\label{2.6}
\begin{split}
\| F(\theta_1) - F(\theta_2) \|_{L^p(\R^2)} &\leq \kappa \| \theta_1 -\theta_2 \|_{L^p(\R^2)} + \| R_2(\theta_1 -\theta_2) \frac{\partial \theta_1}{\partial x_1} + R_2 \theta_2 \frac{\partial(\theta_1-\theta_2)}{\partial x_1}\|_{L^p(\R^2)} \\
&+ \|  R_1(\theta_1-\theta_2) \frac{\partial \theta_1}{\partial x_2} + R_1 \theta_2 \frac{\partial(\theta_1-\theta_2)}{\partial x_2} \|_{L^p(\R^2)}.
\end{split}
\end{equation}
Next, using H\"older inequality and \eqref{Rieszobs}, we  estimate the second term above as follows:
\begin{equation}\label{2.7}
\| R_2(\theta_1 -\theta_2) \frac{\partial \theta_1}{\partial x_1}\|_{L^p(\R^2)} \leq \| R_2(\theta_1 -\theta_2) \|_{L^{2p}(\R^2)} \| \frac{\partial \theta_1}{\partial x_1}\|_{L^{2p}(\R^2)}  
 \leq C \| \theta_1 -\theta_2 \|_{L^{2p}(\R^2)} \| \theta_1\|_{W^{1,2p}(\R^2)}.
\end{equation}
Note that the embeddings $W^{2\alpha^-,p}(\R^2) \subset W^{1,2p}(\R^2)\subset L^{2p}(\R^2)$ are valid since the condition $2\alpha^- - \frac{1}{p} \geq 1$ is satisfied.
Estimating the others components in \eqref{2.6} analogously we conclude:
\begin{equation*}
\| F(\theta_1) - F(\theta_2) \|_{L^p(\R^2)} \leq const\bigl( \| \theta_1 \|_{W^{2\alpha^-,p}(\R^2)}, \| \theta_2 \|_{W^{2\alpha^-,p}(\R^2)} \bigr) \| \theta_1 - \theta_2 \|_{W^{2\alpha^-,p}(\R^2)},
\end{equation*}
which is the required local Lipschitz condition. 
\end{proof}

\begin{remark}
It is seen from the estimate \eqref{2.7} that the quasi-geostrophic equation \eqref{Q-gequation} fall into the class of equations with {\em quadratic nonlinearity}. This is typical also for another equations originated in fluid dynamics as the Burgers equation (see e.g. \cite{B-K-W}) or the celebrated Navier-Stokes equation where a significant part of the considerations is devoted (see e.g. \cite[Chapt. III]{TE1}) to the studies of the {\it bi-linear} or {\it tri-linear forms} connected with the nonlinearity there. 
In our problem, at the first view, the a priori estimate following from the Maximum Principle seems to be sufficient to assure the global in time solvability of such type problems. But usually the detail considerations leading to such conclusion require more attention.   
\end{remark}

\subsection{Global solvability.} Having already the local in time solution of \eqref{Q-gequation}, $\alpha\in (\frac{1}{2}, 1]$, to guarantee its global extendibility we need to have suitable {\it a priori estimates}. Such role in case of the viscous quasi-geostrophic equation \eqref{Q-gequation} is played by the Maximum Principle:

\begin{lemma}\label{MaxPrinc}
For arbitrary $q \in [2,\infty)$, and a sufficiently regular solution of \eqref{Q-gequation}, if $f$ is non-zero and in $L^q(\R^2)$  then
\begin{equation}\label{maxp11}
\|\theta(t,\cdot)\|^q_{L^q(\R^2)} \leq \| \theta_0 \|^q_{L^q(\R^2)} e^{(q-1)t} + \frac{e^{(q-1)t}-1}{q-1} \|f\|_{L^q(\R^2)}^q.
\end{equation}
When $f=0$, the corresponding estimate will take the form:
\begin{equation}\label{maxp}
\|\theta(t,\cdot)\|_{L^q(\R^2)} \leq \| \theta_0 \|_{L^q(\R^2)}.
\end{equation}
\end{lemma}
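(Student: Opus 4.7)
The plan is to perform an $L^q$ energy estimate: multiply \eqref{Q-gequation} by $|\theta|^{q-2}\theta$ and integrate over $\R^2$. The three resulting terms (time derivative, transport, fractional diffusion, source) will each be handled by one structural feature of the problem, after which Gronwall's lemma will produce the stated bound.

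First I would observe that the transport term $\int_{\R^2} (u\cdot\nabla\theta)|\theta|^{q-2}\theta\,dx$ vanishes: rewriting it as $\tfrac{1}{q}\int_{\R^2} u\cdot\nabla(|\theta|^q)\,dx$ and integrating by parts, one uses that $u$ from \eqref{1.3} is divergence-free, since
\begin{equation*}
\nabla\cdot u = -\partial_1 R_2\theta + \partial_2 R_1\theta = 0
\end{equation*}
by the commutativity of partial derivatives inside the Riesz transforms. For the diffusion term I would invoke the Córdoba--Córdoba pointwise inequality
\begin{equation*}
q\,|\theta|^{q-2}\theta\,(-\Delta)^\alpha\theta \;\geq\; (-\Delta)^\alpha(|\theta|^q), \qquad q\geq 2,\ \alpha\in(0,1],
\end{equation*}
valid for sufficiently regular $\theta$ decaying at infinity; integrating yields the key positivity
\begin{equation*}
\int_{\R^2} (-\Delta)^\alpha\theta\cdot|\theta|^{q-2}\theta\,dx \;\geq\; 0.
\end{equation*}
Finally, the source term is controlled by Hölder and Young with conjugate exponents $q$ and $q/(q-1)$:
\begin{equation*}
\int_{\R^2} f\,|\theta|^{q-2}\theta\,dx \leq \|f\|_{L^q}\|\theta\|_{L^q}^{q-1} \leq \tfrac{1}{q}\|f\|_{L^q}^q + \tfrac{q-1}{q}\|\theta\|_{L^q}^q.
\end{equation*}

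Collecting these three bounds and multiplying by $q$ yields the differential inequality
\begin{equation*}
\frac{d}{dt}\|\theta(t)\|_{L^q(\R^2)}^q \;\leq\; (q-1)\|\theta(t)\|_{L^q(\R^2)}^q + \|f\|_{L^q(\R^2)}^q,
\end{equation*}
whose integration via Gronwall's lemma reproduces exactly \eqref{maxp11}. When $f\equiv 0$ the Young splitting is unnecessary and the source term is absent altogether, so $\frac{d}{dt}\|\theta\|_{L^q}^q\leq 0$, giving the monotonicity \eqref{maxp}. The main technical obstacle is the rigorous justification of the Córdoba--Córdoba inequality and of the integration by parts in the transport term: both require enough regularity and decay of $\theta$ so that all expressions and boundary contributions are meaningful. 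For the solutions constructed in Theorem \ref{localexistence}, which satisfy $\theta(t)\in W^{2\alpha,p}(\R^2)$ for $t>0$ with $p$ large, this is automatic; for lower regularity one would apply the estimate to smooth truncations or mollifications of $\theta$ and then pass to the limit, which is precisely the meaning of the phrase \emph{sufficiently regular solution} in the statement.
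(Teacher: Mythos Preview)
Your proof is correct and follows essentially the same route as the paper: multiply by $|\theta|^{q-2}\theta=|\theta|^{q-1}\mathrm{sgn}(\theta)$, kill the transport term by the divergence-free structure of $u$ (the paper does this via the stream-function representation and integration by parts, which is the same thing), invoke the C\'ordoba--C\'ordoba positivity for the fractional diffusion, and control the source by H\"older--Young before applying Gronwall. The only cosmetic difference is that the paper quotes the integrated form $\int(-\Delta)^\alpha\theta\,|\theta|^{q-1}\mathrm{sgn}(\theta)\,dx\geq 0$ directly from \cite{C-C,C-C1}, whereas you state the pointwise inequality first.
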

\begin{proof}
Note that for smooth solutions after multiplying the nonlinear term $u\cdot \nabla \theta$ by $|\theta|^{q-1} sgn(\theta)$ and integrating the result over $\R^2$ the resulting term will vanish:
\begin{equation*}
\begin{split}
\int_{\R^2} &\left(\frac{\partial \theta}{\partial x_1} \frac{\partial}{\partial x_2} [(-\Delta)^{-\frac{1}{2}} \theta] - \frac{\partial \theta}{\partial x_2} \frac{\partial}{\partial x_1}[(-\Delta)^{-\frac{1}{2}} \theta] \right) |\theta|^{q-1} sgn(\theta) dx  \\
&= \frac{1}{q} \int_{\R^2} \left(\frac{\partial (|\theta|^q)}{\partial x_1} \frac{\partial}{\partial x_2} [(-\Delta)^{-\frac{1}{2}} \theta] - \frac{\partial (|\theta|^q)}{\partial x_2} \frac{\partial}{\partial x_1}[(-\Delta)^{-\frac{1}{2}} \theta] \right) dx = 0,
\end{split}
\end{equation*}
thanks to integration by parts. Consequently, multiplying \eqref{Q-gequation} by $|\theta|^{q-1} sgn(\theta)$, we obtain
\begin{equation*}
\int_{\R^2} \theta_t |\theta|^{q-1} sgn(\theta) dx + \kappa \int_{\R^2} (-\Delta)^\alpha \theta |\theta|^{q-1} sgn(\theta) dx = \int_{\R^2} f |\theta|^{q-1} sgn(\theta) dx.
\end{equation*}
Since for all the values $1 \leq q < \infty$ and $0 \leq \alpha \leq1$, as was shown in \cite{C-C} and \cite[Lemma 2.5]{C-C1},
\begin{equation*}
\int_{\R^2} (-\Delta)^\alpha \theta |\theta|^{q-1} sgn(\theta) dx \geq 0,
\end{equation*}
using  H\"older and Young inequalities, we get
\begin{equation*}
\frac{1}{q}\frac{d}{dt} \int_{\R^2} |\theta|^q dx  \leq  \int_{\R^2} f |\theta|^{q-1} sgn(\theta) dx
 \leq \frac{1}{q} \| f \|_{L^q(\R^2)}^q + \frac{q-1}{q} \| \theta \|_{L^q(\R^2)}^q.
\end{equation*}
Solving the above differential inequality  we get \eqref{maxp11} when $f \neq 0$ and \eqref{maxp} otherwise.
\end{proof}
Recall, see \cite[pp. 72-73]{C-D}, that to be able to extend globally in time the local $W^{2\alpha^-,p}(\R^2)$-solution constructed above, with bounded orbits of bounded sets, we need to have a subordination condition for it of the form:
\begin{equation*}
\| F(\theta(t, \theta_0)) \|_{L^p(\R^2)} \leq const(\|\theta(t, \theta_0)\|_Y) \left(1 + \| \theta(t, \theta_0)\|^\eta_{W^{2\alpha^-,p}(\R^2)}\right), \
t\in (0, \tau_{\theta_0}),
\end{equation*}
with certain auxiliary Banach space $W^{2\alpha,p}(\R^2)\subset Y$ and certain $\eta \in [0,1)$. The dependence of the solution on $\theta_0$ is marked in the notation here.

We will chose $Y: = L^{p}(\R^2) \cap L^{2p}(\R^2)$ for the problem \eqref{Q-gequation}.  By the H\"older inequality and the Nirenberg-Gagliardo estimate (e.g. \cite[p. 25]{C-D}) we obtain
\begin{equation*}
\begin{split}
\| F(\theta)\|_{L^p(\R^2)} &\leq c\| \theta \|_{L^{2p}(\R^2)} \| \theta \|_{W^{1, 2p}(\R^2)} + \| f \|_{L^p(\R^2)}+\kappa\| \theta\|_{L^p(\R^2)}   \\
&\leq const\left(\|\theta \|_{L^{p}(\R^2)}, \|\theta \|_{L^{2p}(\R^2)}\right) (1+\| \theta \|^\eta_{W^{2\alpha^-,p}(\R^2)})+ \| f \|_{L^p(\R^2)},
\end{split}
\end{equation*}
where $\frac{1}{2\alpha^--\frac{1}{p}}  < \eta<1$.
Since, due to Lemma \ref{MaxPrinc}, all the $L^{q}(\R^2)$, $q\geq p$, type norms are estimated a priori, we have the required bound sufficient for the global in time extendibility of the local solution. We thus claim:
\begin{theorem}
The local solution constructed in Theorem \ref{localexistence} will be extended globally in time. Moreover, the orbits of bounded subsets of $W^{2\alpha^-,p}(\R^2)$ are bounded. For arbitrary $\theta_0 \in W^{2\alpha^-,p}(\R^2)$ we have introduced a semigroup on $W^{2\alpha^-,p}(\R^2)$ by the formula:
\begin{equation*}
S(t)(\theta_0) = \theta(t),  \  t \geq 0,
\end{equation*}
where $\theta(t)$ denotes the corresponding to $\theta_0$  global in time $W^{2\alpha^-,p}(\R^2)$-solution of \eqref{Q-gequation}.
\end{theorem}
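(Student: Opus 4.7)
The plan is to invoke the standard global continuation criterion for semilinear parabolic equations with a sectorial operator, see \cite[pp. 72–73]{C-D}: a local mild solution of $\theta_t + A_\alpha \theta = F(\theta)$ in $X^\beta = W^{2\alpha^-,p}(\R^2)$ extends globally, with orbits of bounded sets bounded, as soon as one can exhibit an auxiliary Banach space $Y$ with $W^{2\alpha,p}(\R^2) \subset Y$ together with an exponent $\eta \in [0,1)$ such that
\begin{equation*}
\| F(\theta) \|_{L^p(\R^2)} \leq c\bigl(\| \theta\|_Y\bigr)\bigl( 1 + \|\theta\|_{W^{2\alpha^-,p}(\R^2)}^\eta \bigr),
\end{equation*}
and a priori bounds are available for the $Y$-norm of the trajectory on every finite time interval, uniformly for initial data in bounded subsets of $W^{2\alpha^-,p}(\R^2)$.

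First I would choose $Y := L^p(\R^2) \cap L^{2p}(\R^2)$. The linear pieces $f + \kappa\theta$ in \eqref{2.4} are controlled directly by $\|\theta\|_{L^p}$ and $\|f\|_{L^p}$, while the quadratic terms $R_j\theta\cdot \partial_{x_i}\theta$ are estimated, as in \eqref{2.7}, by H\"older's inequality combined with $L^{2p}$-boundedness of the Riesz transforms:
\begin{equation*}
\| R_j \theta \cdot \partial_{x_i}\theta\|_{L^p(\R^2)} \leq C\,\|\theta\|_{L^{2p}(\R^2)}\,\|\theta\|_{W^{1,2p}(\R^2)}.
\end{equation*}
Because \eqref{asonF} yields $2\alpha^- - \tfrac{1}{p} > 1$, the Nirenberg–Gagliardo inequality provides an $\eta \in \bigl(1/(2\alpha^- - \tfrac{1}{p}),\,1\bigr)$ with $\|\theta\|_{W^{1,2p}(\R^2)} \leq C\,\|\theta\|_{L^{2p}(\R^2)}^{1-\eta}\,\|\theta\|_{W^{2\alpha^-,p}(\R^2)}^\eta$, and substituting gives exactly the required subordination inequality, with multiplicative constant depending only on the $Y$-norm of $\theta$.

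Second, the a priori $Y$-bound is read off from Lemma \ref{MaxPrinc}: for every finite $q \geq p$ (in particular $q=p$ and $q=2p$), estimate \eqref{maxp11} controls $\|\theta(t)\|_{L^q(\R^2)}$ by a continuous function of $t$, $\|\theta_0\|_{L^q(\R^2)}$ and $\|f\|_{L^q(\R^2)}$. Since bounded subsets of $W^{2\alpha^-,p}(\R^2)$ embed continuously into $L^q(\R^2)$ for every such $q$ (by Sobolev embedding together with preservation of $L^p$), the $Y$-norm of $\theta(t,\theta_0)$ stays bounded on each finite interval uniformly over bounded sets of initial data. Plugging both ingredients into the continuation theorem yields $\tau_{\theta_0} = +\infty$ and boundedness of $S(t)B$ for $B \subset W^{2\alpha^-,p}(\R^2)$ bounded; the semigroup identities $S(0) = \mathrm{id}$ and $S(t+s) = S(t)\circ S(s)$ are then immediate from uniqueness in Theorem \ref{localexistence}.

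The main obstacle, and the single place where \eqref{asonF} is used essentially, is securing $\eta < 1$ in the Nirenberg–Gagliardo step; the strict inequality $2\alpha - \tfrac{2}{p} > 1$ is precisely what makes this possible, once one absorbs the small loss in the exponent into the $\alpha^-$ notation. Apart from this, the argument is a routine assembly of boundedness of the Riesz transforms, interpolation, and the $L^q$ a priori estimates of Lemma \ref{MaxPrinc}.
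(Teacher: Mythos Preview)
Your proposal is correct and follows essentially the same route as the paper: the same choice $Y = L^p(\R^2)\cap L^{2p}(\R^2)$, the same H\"older/Riesz estimate reducing the quadratic term to $\|\theta\|_{L^{2p}}\|\theta\|_{W^{1,2p}}$, the same Nirenberg--Gagliardo interpolation giving an exponent $\eta\in\bigl(1/(2\alpha^--\tfrac{1}{p}),1\bigr)$, and the same appeal to Lemma \ref{MaxPrinc} for the a priori $L^q$ bounds. The only additions you make---spelling out the Sobolev embedding of bounded sets of initial data into $L^q$ and the semigroup identities via uniqueness---are routine and entirely compatible with the paper's argument.
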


\subsection{Regularizing effect of the equation.} The solutions of semilinear equations with sectorial operator are regularised for $t>0 $. Our solution constructed above belongs to $W^{2\alpha,p}(\R^2)$ for $t>0$.

Multiplying equation \eqref{Q-gequation} by $t (-\Delta)^\alpha\theta$, we obtain
\begin{equation*}
\int_{\R^2} t \theta_t (-\Delta)^\alpha\theta dx + \int_{\R^2} t (-R_2\theta \frac{\partial\theta}{\partial x_1} + R_1\theta \frac{\partial\theta}{\partial x_2}) (-\Delta)^\alpha\theta dx + \kappa \int_{\R^2} t [(-\Delta)^\alpha\theta]^2 dx = \int_{\R^2} t f (-\Delta)^\alpha\theta dx.
\end{equation*}
The components are next estimated in the following way:
\begin{equation*}
\int_{\R^2} t \theta_t (-\Delta)^\alpha\theta dx = 
\frac{1}{2} \int_{\R^2} t \frac{\partial}{\partial t} [(-\Delta)^{\frac{\alpha}{2}} \theta]^2 dx   \\
= \frac{1}{2} \frac{d}{dt} \int_{\R^2} t [(-\Delta)^{\frac{\alpha}{2}} \theta]^2 dx  - \frac{1}{2} \int_{\R^2} [(-\Delta)^{\frac{\alpha}{2}} \theta]^2 dx,
\end{equation*}
where we need earlier estimate of $\theta$ in $L^2((0,T); H^\alpha(\R^2))$. The nonlinear term is estimated as
\begin{equation*}
t|\int_{\R^2} R_2\theta \frac{\partial\theta}{\partial x_1} (-\Delta)^\alpha\theta dx| \leq t \| R_2\theta \|_{L^{\infty^-}(\R^2)} \|\frac{\partial\theta}{\partial x_1}\|_{L^{2^+}(\R^2)} \| (-\Delta)^\alpha\theta]\|_{L^2(\R^2)},
\end{equation*}
where $\frac{1}{\infty^-} + \frac{1}{2^+} + \frac{1}{2}= 1$. Since $\| R_2\theta \|_{L^{\infty^-}(\R^2)} \leq const$ by Lemma \ref{MaxPrinc} and
\begin{equation*}
\|\frac{\partial\theta}{\partial x_1}\|_{L^{2^+}(\R^2)} \leq \| \theta \|_{W^{1,2^+}(\R^2)} \leq c \| \theta \|^{1-\nu}_{L^2(\R^2)} \|\theta\|^\nu_{H^{2\alpha}(\R^2)},
\end{equation*}
where $\nu \geq \frac{1 - \frac{1}{2^+}}{\alpha}$ will be chosen less than one, such nonlinear term will be controlled by $t \|\theta \|^2_{H^{2\alpha}(\R^2)}$.
Another components are estimated in a standard way.  

\section{  Solvability of subcritical \eqref{Q-gequation}, $\alpha \in (\frac{1}{2}, 1]$, in $H^{s+2\alpha^-}(\R^2)$.}\label{section4}
As in \cite{K-N-V}, it is also possible to chose $H^s(\R^2)$ as a {\it base space} where we consider the equation \eqref{Q-gequation}. In that case the resulting phase space will be equal $H^{2\alpha^- +s}(\R^2)$ and, when $s>1$, it will enjoy the embedding $H^{2\alpha^- +s}(\R^2) \subset W^{1,\infty}(\R^2)$.

{\bf Local solvability.} As in the previous case $X = L^p(\R^2)$ in subsection \ref{sub2.2}, we need to check that the nonlinearity \eqref{2.4} is Lipschitz on bounded sets as a map from $H^{2\alpha^- +s}(\R^2)$ into $H^s(\R^2)$, $s>1$. Indeed, as in \eqref{2.6}, we have
\begin{equation}\label{3.15}
\begin{split}
\| F(\theta_1) - F(\theta_2) \|_{H^s(\R^2)} &\leq \kappa \| \theta_1 -\theta_2 \|_{H^s(\R^2)}+\| R_2(\theta_1 -\theta_2) \frac{\partial \theta_1}{\partial x_1} + R_2 \theta_2 \frac{\partial(\theta_1-\theta_2)}{\partial x_1} \|_{H^s(\R^2)} \\
&+ \|  R_1(\theta_1-\theta_2) \frac{\partial \theta_1}{\partial x_2} + R_1 \theta_2 \frac{\partial(\theta_1-\theta_2)}{\partial x_2} \|_{H^s(\R^2)}.
\end{split}
\end{equation}
The second term above  we estimate using  the property (e.g. \cite[p. 115]{AD}), that {\it when $\Omega$ is a domain in $\R^N$ then $W^{m,r}(\Omega)$ is a Banach Algebra provided that $mr > N$}. In our case $2s >2$ since $s>1$. Hence
\begin{equation}\label{nondif}
\| R_2(\theta_1 - \theta_2) \frac{\partial \theta_1}{\partial x_1}\|_{H^s(\R^2)} \leq C \| R_2(\theta_1 - \theta_2) \|_{H^s(\R^2)} \| \frac{\partial \theta_1}{\partial x_1}\|_{H^s(\R^2)}\leq c \| |u_1 - u_2| \|_{H^s(\R^2)} \| \theta_1 \|_{H^{2\alpha^- +s}(\R^2)},
\end{equation}
where $u_i$, $i=1,2$, correspond to $\theta_i$ through relation \eqref{1.3}.\\
 Applying the  property \eqref{DU}  (see also  \cite[p. 12]{W1})
to the first term in \eqref{nondif} we get
\begin{equation*}
\| R_2(\theta_1 - \theta_2) \frac{\partial \theta_1}{\partial x_1}\|_{H^s(\R^2)} \leq c \| \theta_1 - \theta_2 \|_{H^{2\alpha^- +s}(\R^2)} \| \theta_1 \|_{H^{2\alpha^- +s}(\R^2)}.
\end{equation*}
The others components in \eqref{3.15} are estimated analogously. Consequently we obtain that
\begin{equation*}
\| F(\theta_1) - F(\theta_2) \|_{H^s(\R^2)} \leq c'(\| \theta_1 \|_{H^{2\alpha^- +s}(\R^2)}, \| \theta_2 \|_{H^{2\alpha^- +s}(\R^2)}) \| \theta_1-\theta_2\|_{H^{2\alpha^- +s}(\R^2)},
\end{equation*}
which proves local solvability of \eqref{Q-gequation} in $H^{2\alpha^- +s}(\R^2)$ phase space. More precisely, following \cite{HE, C-D}, we formulate:  

\begin{theorem}\label{sloc}
Let $s>1$ be fixed. Then, for $f\in H^s(\R^2)$ and for arbitrary $\theta_0 \in H^{2\alpha^- +s}(\R^2)$, there exists in the phase space $H^{2\alpha^- +s}(\R^2)$ a unique local in time mild solution $\theta(t)$ to the subcritical problem \eqref{Q-gequation}, $\alpha \in (\frac{1}{2},1]$. Moreover,  
\begin{equation*}
\theta \in C((0,\tau); H^{2\alpha+s}(\R^2)) \cap C([0,\tau); H^{2\alpha^- +s}(\R^2)), \  \theta_t \in C((0,\tau); H^{2\gamma+s}(\R^2)),
\end{equation*}
with arbitrary $\gamma < \alpha^-$. Here $\tau> 0$ is the 'live time' of that local in time solution. Moreover, the {\em Cauchy formula} is satisfied:
\begin{equation*}
\theta(t) = e^{-A_\alpha t} \theta_0 + \int_0^t e^{-A_\alpha(t-s)} F(\theta(s)) ds, \  t \in [0, \tau),
\end{equation*}
where $e^{-A_\alpha t}$ denotes the linear semigroup corresponding to the operator $A_\alpha := \kappa[(-\Delta)^\alpha +I]$ in $H^s(\R^2)$, and
$F$ is given by formula \eqref{2.4}.
\end{theorem}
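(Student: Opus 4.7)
The strategy is to recast the equation in the abstract semilinear form $\theta_t + A_\alpha \theta = F(\theta)$ and invoke the local-existence machinery of Henry and Cholewa--Dlotko, exactly as in the proof of Theorem \ref{localexistence}, only with the base space $X := H^s(\R^2)$ in place of $L^p(\R^2)$. Since the preceding computations already supply the crucial Lipschitz estimate for $F$, the remaining task is essentially verification of the abstract hypotheses in the new functional setting.

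The first step is to fix the functional-analytic setup. I would take $A_\alpha = \kappa[(-\Delta)^\alpha + I]$ as a closed operator in $X = H^s(\R^2)$ with domain $D(A_\alpha) = H^{s+2\alpha}(\R^2)$. Since $(-\Delta)^\alpha$ acts as the Fourier multiplier $|\xi|^{2\alpha}$ it is self-adjoint and non-negative on $H^s(\R^2)$, and the shift by $\kappa I$ makes $A_\alpha$ strictly positive and sectorial. The associated fractional power scale is $X^\beta = H^{s+2\alpha\beta}(\R^2)$, so choosing $\beta := \alpha^-/\alpha \in (0,1)$ identifies $X^\beta$ with the claimed phase space $H^{s+2\alpha^-}(\R^2)$.

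The second step is the local Lipschitz bound $F : X^\beta \to X$, which has already been carried out in \eqref{3.15}--\eqref{nondif}: the Banach-algebra property of $H^s(\R^2)$ for $s>1$ (available because $2s > 2$), combined with continuity of the Riesz transforms on $H^s(\R^2)$ and property \eqref{DU}, yields exactly the condition \eqref{loclip} in the phase space $X^\beta$. The affine piece $f + \kappa \theta$ is trivially Lipschitz. With this in hand, the abstract local-existence theorem of \cite{HE, C-D} supplies a unique mild solution in $C([0,\tau); X^\beta)$ on a maximal interval of existence, satisfying the Cauchy formula with the semigroup $e^{-A_\alpha t}$ generated by $-A_\alpha$ in $X$.

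The final step is the interior regularity for $t>0$. The standard analytic-semigroup estimate $\| A_\alpha^\gamma e^{-A_\alpha t} \|_{\mathcal{L}(X)} \leq C t^{-\gamma}$, inserted into the Cauchy formula together with a short bootstrap, lifts $\theta(t)$ continuously into $X^1 = H^{s+2\alpha}(\R^2)$ on $(0,\tau)$, and differentiating the Cauchy formula then gives $\theta_t \in C((0,\tau); H^{s+2\gamma}(\R^2))$ for any $\gamma < \alpha^-$; this parallels verbatim what was done in Theorem \ref{localexistence}. The only place where anything substantial could have gone wrong is the nonlinear Lipschitz estimate, and that is precisely the main obstacle---once it is in place via the Banach-algebra structure of $H^s(\R^2)$ with $s>1$, the remainder of the argument is a direct transcription of the abstract sectorial theory.
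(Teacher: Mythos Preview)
Your proposal is correct and follows essentially the same route as the paper: verify that $F: H^{2\alpha^-+s}(\R^2)\to H^s(\R^2)$ is Lipschitz on bounded sets via the Banach-algebra property of $H^s(\R^2)$ for $s>1$ together with the Riesz-transform bound \eqref{DU}, then invoke the abstract sectorial theory of \cite{HE, C-D}. The paper's argument is precisely the computation \eqref{3.15}--\eqref{nondif} followed by an appeal to that same abstract machinery, so there is nothing to add.
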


{\bf Global solvability.} To guarantee the global in time solvability of \eqref{Q-gequation} in $H^{2\alpha^- +s}(\R^2)$ the a priori estimate \eqref{3.32} below will be used. It shows that $c'(\| \theta_1 \|_{H^{2\alpha^- +s}(\R^2)}, \| \theta_2 \|_{H^{2\alpha^- +s}(\R^2)})$ is bounded on the solutions. Consequently we have global Lipschitz continuity and boundedness of the nonlinear term $F$ as a map from $H^{2\alpha^- +s}(\R^2)$ to $H^s(\R^2)$.

\subsection{Further a priori estimates.}\label{FPE}
Following calculations in \cite[p. 1165]{W2} we are able to estimate higher Sobolev norms of the solutions to \eqref{Q-gequation}. Let $l \geq \alpha$ be fixed  and $f\in H^{l-\alpha}(\R^2)$. Multiplying the equation by $(-\Delta)^l\theta$ we obtain
\begin{equation}\label{3.22}
\frac{1}{2} \frac{d}{dt} \int_{\R^2} [(-\Delta)^{\frac{l}{2}} \theta]^2 dx + \kappa \int_{\R^2} [(-\Delta)^{\frac{l+\alpha}{2}}\theta]^2 dx = \int_{\R^2} (-\Delta)^{\frac{l-\alpha}{2}}f (-\Delta)^{\frac{l+\alpha}{2}}\theta dx - \int_{\R^2} u\cdot \nabla\theta (-\Delta)^l\theta dx.
\end{equation}
The nonlinear term is transformed as follows: 
\begin{equation}\label{3.24}
\begin{split}
|- \int_{\R^2} u\cdot \nabla\theta (-\Delta)^l\theta dx| &=  |\int_{\R^2} (-\Delta)^{\frac{l-\alpha}{2}}(u \cdot\nabla\theta) (-\Delta)^{\frac{l+\alpha}{2}}\theta dx| \\
&\leq \|(-\Delta)^{\frac{l+\alpha}{2}}\theta \|_{L^2(\R^2)} \|(-\Delta)^{\frac{l-\alpha}{2}} (-R_2\theta \frac{\partial\theta}{\partial x_1} + R_1\theta \frac{\partial\theta}{\partial x_2}) \|_{L^2(\R^2)}.
\end{split}
\end{equation}
Using a nontrivial estimate \eqref{commutator} with $\frac{1}{q} + \frac{1}{r} = \frac{1}{2}$ we obtain 
\begin{equation}\label{4.12ext}
\begin{split}
\|(-\Delta)^{\frac{l-\alpha}{2}} (-R_2\theta \frac{\partial\theta}{\partial x_1} &+ R_1\theta \frac{\partial\theta}{\partial x_2}) \|_{L^2(\R^2)}\\&\leq 
c\bigl[\|(-\Delta)^{\frac{l-\alpha}{2}} R_2\theta \|_{L^{q}(\R^2)} \|\frac{\partial\theta}{\partial x_1}\|_{L^{r}(\R^2)}   
+ \|R_2\theta\|_{L^{r}(\R^2)} \|(-\Delta)^{\frac{l-\alpha}{2}} \frac{\partial\theta}{\partial x_1}\|_{L^{q}(\R^2)}  
\\&+ \|(-\Delta)^{\frac{l-\alpha}{2}} R_1\theta \|_{L^{q}(\R^2)} \|\frac{\partial\theta}{\partial x_2}\|_{L^{r}(\R^2)} + \|R_1\theta\|_{L^{r}(\R^2)} \|(-\Delta)^{\frac{l-\alpha}{2}} \frac{\partial\theta}{\partial x_2}\|_{L^{q}(\R^2)}\bigr].
\end{split}
\end{equation}
Next we will use the property (e.g. \cite[p. 300]{M-S}) that for $f \in D((-\Delta)^\beta)$, $\beta>0$, $j=1,2$, 
$$
(-\Delta)^\beta R_j f = R_j(-\Delta)^\beta f,
$$
and  \eqref{Rieszobs}. Thus, estimate \eqref{4.12ext} extends to 
\begin{equation*}
\begin{split}
\|(-\Delta)^{\frac{l-\alpha}{2}} (-R_2\theta \frac{\partial\theta}{\partial x_1} + R_1\theta \frac{\partial\theta}{\partial x_2}) \|_{L^2(\R^2)} 
&\leq c  \bigl[\|(-\Delta)^{\frac{l-\alpha}{2}} \theta \|_{L^{q}(\R^2)} \|\theta \|_{W^{1,r}(\R^2)} + \|\theta\|_{L^{r}(\R^2)} \| \theta \|_{W^{l+1-\alpha,q}(\R^2)}\bigr].  
\end{split}
\end{equation*}  
 Taking  $q = \frac{1}{1-\alpha^-}$ and $r = \frac{2}{2\alpha^- -1}$,  by the Nirenberg-Gagliardo inequality,  we have
\begin{equation*}
\|\theta \|_{W^{l+1-\alpha,q}(\R^2)} \leq c \| \theta \|_{H^{l+\alpha}(\R^2)}^\eta\| \theta \|^{1-\eta}_{L^{2}(\R^2)},
\end{equation*}
\begin{equation*}
\|\theta \|_{W^{l-\alpha,q}(\R^2)} \leq c \|\theta \|^{\eta_1}_{H^{l+\alpha}(\R^2)} \|\theta \|^{1-\eta_1}_{L^{r}(\R^2)}, \quad \textrm{and}\quad \|\theta \|_{W^{1,r}(\R^2)} \leq c\|\theta \|^{\eta_2}_{H^{l+\alpha}(\R^2)} \|\theta \|^{1-\eta_2}_{L^{r}(\R^2)},
\end{equation*}
with some $\eta<1$ and $\eta_1+\eta_2 < 1$.
\\Using Young inequality, due to Lemma \ref{MaxPrinc}, the nonlinear term in \eqref{3.24} is estimated through
\begin{equation*}
|- \int_{\R^2} u\cdot \nabla\theta (-\Delta)^l\theta dx| \leq \frac{\kappa}{4}\| (-\Delta)^{\frac{l+\alpha}{2}}\theta \|^2_{L^2(\R^2)}+ const(\| \theta_0 \|_{L^{2}(\R^2)};\|\theta_0\|_{L^{r}(\R^2)}).
\end{equation*}
Consequently, from \eqref{3.22} and the above estimate we obtain a differential inequality:
\begin{equation}\label{3.32}
 \frac{d}{dt} \int_{\R^2} [(-\Delta)^{\frac{l}{2}} \theta]^2 dx + \kappa \int_{\R^2} [(-\Delta)^{\frac{l+\alpha}{2}}\theta]^2 dx \leq c(\| \theta_0\|_{L^\infty(\R^2)}, \| \theta_0\|_{L^2(\R^2)};\| f \|_{H^{l-\alpha}(\R^2)}),
\end{equation}
providing us, together with Lemma \ref{MaxPrinc}, the bound for $\|\theta \|_{H^l(\R^2)}$, $l\geq \alpha$, through $\| f \|_{H^{l-\alpha}(\R^2)}$.

\section{Asymptotic behavior of solutions to \eqref{lQ-gequation}.}
In this section we consider regularization of quasi-geostrophic  equation
\begin{equation*}
\theta_t  +u \cdot \nabla\theta = 0, \quad x \in \R^2,\, t>0,  \\
\end{equation*}
of the following form
\begin{equation}\label{lQ-gequation}
\begin{split}
&\theta_t + \lambda \theta +u \cdot \nabla\theta + \kappa(-\Delta)^\alpha \theta = f, \quad x \in \R^2,\, t>0,  \\
&\theta(0,x) = \theta_0(x),
\end{split}
\end{equation}
where $\kappa$, $\alpha$, and $u$ are like in \eqref{Q-gequation} and \eqref{1.2} and $\lambda>0$.
We changed the equation \eqref{Q-gequation} by adding to its left hand side the linear damping term $\lambda \theta$; compare \cite{GH} for discussion of the role of such a damping. This modification allow for existence of a global attractor. Using the techniques of the {\it tail estimates} (e.g. \cite{WA}), we study existence of such global attractor. Like for problem \eqref{Q-gequation} in section \ref{section4} we choose $H^s(\mathbb{R}^2)$, $s>1$, as a base space.  It is easy to check  that the global  in time solvability   and properties of solution of problem \eqref{Q-gequation}, obtained in  section \ref{section4}, are also true for solutions of \eqref{lQ-gequation}. The following theorem holds: 
 
\begin{theorem}\label{lsloc}
Let $s>1$ be fixed. Then, for $f\in H^{s}(\R^2)$ and for arbitrary $\theta_0 \in H^{2\alpha^- +s}(\R^2)$, there exists in the phase space $H^{2\alpha^- +s}(\R^2)$ a unique global in time mild solution $\theta(t)$ to the subcritical problem \eqref{lQ-gequation}, $\alpha \in (\frac{1}{2},1]$. Moreover,  
\begin{equation*}
\theta \in C((0,+\infty); H^{2\alpha+s}(\R^2)) \cap C([0,+\infty); H^{2\alpha^- +s}(\R^2)), \  \theta_t \in C((0,+\infty); H^{2\gamma+s}(\R^2)),
\end{equation*}
with arbitrary $\gamma < \alpha^-$.  Furthermore, the {\em Cauchy formula} is satisfied:
\begin{equation*}
\theta(t) = e^{-A_{\alpha,\lambda} t} \theta_0 + \int_0^t e^{-A_{\alpha,\lambda}(t-s)} F_1(\theta(s)) ds, \  t \in [0, +\infty),
\end{equation*}
where $e^{-A_{\alpha,\lambda} t}$ denotes the linear semigroup corresponding to the operator $A_{\alpha,\lambda} := \kappa(-\Delta)^\alpha +\lambda I$ in $H^s(\R^2)$, and
$$
F_1(\theta) = R_2 \theta \frac{\partial \theta}{\partial x_1} - R_1 \theta \frac{\partial \theta}{\partial x_2} + f.
$$
\end{theorem}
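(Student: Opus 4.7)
The plan is to mimic the proofs of Theorems \ref{sloc} and of the a priori bounds of Subsection \ref{FPE} almost verbatim, since the modification consists only of replacing $\kappa I$ by $\lambda I$ on the linear side and simultaneously removing the $\kappa \theta$ term from the nonlinearity. First I would record that $A_{\alpha,\lambda} := \kappa(-\Delta)^\alpha + \lambda I$ is a sectorial positive operator in $H^s(\R^2)$ whose scale of fractional power spaces coincides with that of $A_\alpha = \kappa[(-\Delta)^\alpha + I]$ used in Section \ref{section4}, since the two operators differ by a bounded multiple of the identity. In particular the domain of $A_{\alpha,\lambda}^\beta$ with $\beta = \frac{2\alpha^-+s}{2\alpha}$ is still $H^{2\alpha^- + s}(\R^2)$, so the whole abstract framework of \cite{HE, C-D} applies.

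For local solvability I would then check that $F_1(\theta) = R_2\theta\,\partial_{x_1}\theta - R_1\theta\,\partial_{x_2}\theta + f$ is Lipschitz continuous on bounded sets from $H^{2\alpha^-+s}(\R^2)$ to $H^s(\R^2)$, $s>1$. This is precisely the estimate already performed in (\ref{3.15})--(\ref{nondif}), because removing the harmless $\kappa\theta$ summand does not affect the argument: the Banach-algebra property of $H^s(\R^2)$ for $s>1$, the commutation $(-\Delta)^\beta R_j = R_j(-\Delta)^\beta$ and the boundedness of $R_j$ on $H^s(\R^2)$ give exactly the same bound. Existence of a unique local mild solution with the claimed regularity $\theta \in C((0,\tau); H^{2\alpha+s}) \cap C([0,\tau); H^{2\alpha^-+s})$ and $\theta_t \in C((0,\tau); H^{2\gamma+s})$, together with the Cauchy formula written now with $e^{-A_{\alpha,\lambda} t}$, then follows from the standard theory of sectorial equations.

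For global existence I would run the a priori analysis of Subsection \ref{FPE} with the extra damping taken into account. Multiplying \eqref{lQ-gequation} by $|\theta|^{q-1}\mathrm{sgn}(\theta)$ and integrating, the nonlinear term vanishes as in Lemma \ref{MaxPrinc}, the fractional diffusion term is non-negative, and the new damping produces an extra $\lambda\|\theta\|_{L^q}^q$ on the left, so the $L^q$-bounds of Lemma \ref{MaxPrinc} continue to hold (in fact with improved asymptotic behaviour). Repeating next the computation leading to \eqref{3.22}--\eqref{3.32}, the testing by $(-\Delta)^l\theta$ only adds a non-negative term $\lambda\|(-\Delta)^{l/2}\theta\|_{L^2}^2$ to the left-hand side of the differential inequality, so the bound on $\|\theta\|_{H^l(\R^2)}$, $l \geq \alpha$, in terms of $\|f\|_{H^{l-\alpha}}$ and the initial data is preserved. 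These bounds prevent the $H^{2\alpha^-+s}$-norm of the solution from blowing up in finite time, giving the extension of the local solution to $[0,+\infty)$.

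There is no real obstacle in the argument: the only points that deserve a remark are the algebraic one (that $A_{\alpha,\lambda}$ and $A_\alpha$ generate the same fractional power scale, so the phase space is unchanged), and the sign observation that the damping term $\lambda\theta$ is favourable in every energy estimate. Once these are noted, Theorem \ref{lsloc} is an immediate transcription of Theorem \ref{sloc} combined with the global-in-time extension argument of Section \ref{section4}.
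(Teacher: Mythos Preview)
Your proposal is correct and follows exactly the route the paper itself takes: the paper does not give a separate proof of Theorem \ref{lsloc} but simply remarks that ``the global in time solvability and properties of solution of problem \eqref{Q-gequation}, obtained in section \ref{section4}, are also true for solutions of \eqref{lQ-gequation}'', which is precisely what you spell out in detail. Your two observations --- that $A_{\alpha,\lambda}$ and $A_\alpha$ share the same fractional power scale, and that the damping term $\lambda\theta$ enters every energy estimate with a favourable sign --- are the only points needed to carry Section \ref{section4} over verbatim.
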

The global solution to \eqref{lQ-gequation} constructed in Theorem \ref{lsloc} define on the phase space $H^{s+2\alpha-}(\R^2)$ a semigroup
\begin{eqnarray*}
S_{\alpha,\lambda}(t)\theta_0=\theta(t,\theta_0),\quad t\geq 0,\,\alpha\in(\frac{1}{2},1],\,\lambda>0.
\end{eqnarray*}

\begin{remark}
For arbitrary $q \in [2,\infty)$, and a sufficiently regular solution of \eqref{lQ-gequation}, if $f$ is non-zero and in $L^q(\R^2)$, then
\begin{equation*} 
\|\theta(t,\cdot)\|^q_{L^q(\R^2)} \leq \| \theta_0 \|^q_{L^q(\R^2)} e^{-\frac{\lambda qt}{2}} + C(\lambda)\left(1-e^{-\frac{\lambda qt}{2}}\right) \|f\|_{L^q(\R^2)}^q.
\end{equation*}
\end{remark}
\begin{remark} \label{HS}
The estimates of higher Sobolev norms $H^l(\mathbb{R}^2)$, $l>\alpha$, of solutions to \eqref{Q-gequation} obtained in subsection \ref{FPE} hold also for solutions to \eqref{lQ-gequation}.
\end{remark}

We  study now existence of the global attractor for the semigroups generated by  the problem \eqref{lQ-gequation}.
We choose the smooth (at least $C^2$, but we prefer $\eta \in C^\infty$) cut-off function $\eta: \mathbb{R}^2 \rightarrow [0,1]$,
\begin{equation*}
\eta(x)=
\begin{cases}
1,\quad |x|\geqslant 2,\\
0,\quad |x|\leqslant 1.
\end{cases}
\end{equation*} 

\begin{lemma}\label{Lemma5}
For any $\alpha \in (0,1)$, there exists a constant $M=M(\alpha,\eta)>0$ such that
\[
|(-\Delta)^{\alpha}\eta(x)|\leqslant M <\infty \quad \text{for all}~x\in \mathbb{R}^2.
\]
\end{lemma}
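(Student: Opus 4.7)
My plan is to bound $(-\Delta)^\alpha\eta$ pointwise by exploiting the singular integral representation of the fractional Laplacian and the fact that $\eta$ is smooth and bounded (with bounded derivatives of every order, since $1-\eta$ has compact support). Writing the operator in its symmetric difference form,
\begin{equation*}
(-\Delta)^\alpha \eta(x) \;=\; c_\alpha \int_{\R^2} \frac{2\eta(x)-\eta(x+y)-\eta(x-y)}{|y|^{2+2\alpha}}\,dy,
\end{equation*}
makes the integrand locally integrable without invoking a principal value, and this is the form I would use.

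The key step is to split the integral into a near-origin region $\{|y|\le 1\}$ and a far region $\{|y|\ge 1\}$. On the near region, I would apply a second-order Taylor expansion: since $\eta\in C^\infty(\R^2)$ with $D^2\eta$ bounded (its support is contained in $\{1\le|x|\le 2\}$), we have
\begin{equation*}
|2\eta(x)-\eta(x+y)-\eta(x-y)| \;\le\; \|D^2\eta\|_{L^\infty}\,|y|^2,
\end{equation*}
so the integrand is dominated by $\|D^2\eta\|_{L^\infty}|y|^{-2\alpha}$, which is integrable over $\{|y|\le 1\}$ in $\R^2$ because $2\alpha<2$. On the far region, I simply use $|2\eta(x)-\eta(x+y)-\eta(x-y)|\le 4\|\eta\|_{L^\infty}=4$, so the integrand is dominated by $4|y|^{-2-2\alpha}$, which is integrable at infinity since $2+2\alpha>2$. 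Crucially, both bounds are \emph{independent of $x$}, which yields the uniform constant $M(\alpha,\eta)$.

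The only care needed is in justifying the symmetric singular integral representation; this is standard (see e.g.\ the references to Majda--Bertozzi or the Caffarelli--Silvestre literature already implicit in \cite{C-C,C-C1}) and applies to any Schwartz-class or bounded smooth function with sufficient decay of $1-\eta$, which our $\eta$ satisfies since $1-\eta$ is compactly supported. I do not anticipate a genuine obstacle: the only subtlety is checking that the representation formula is valid for a bounded (not decaying) $\eta$, but this follows by applying it to $\eta-1$ (which is compactly supported) and noting that $(-\Delta)^\alpha$ annihilates constants in the distributional sense, or equivalently by observing directly that the symmetric-difference integrand decays like $|y|^{-2-2\alpha}$ at infinity regardless of the value of $\eta$ at $x$.
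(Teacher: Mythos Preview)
Your argument is correct and complete. The symmetric second-difference representation of $(-\Delta)^\alpha$ applied to a bounded $C^2$ function, followed by the near/far splitting with the Taylor bound on $\{|y|\le 1\}$ and the trivial $L^\infty$ bound on $\{|y|\ge 1\}$, gives exactly the uniform-in-$x$ estimate claimed. Your handling of the fact that $\eta$ does not decay at infinity (working with $\eta-1$, or noting directly that the symmetric integrand is $O(|y|^{-2-2\alpha})$ at infinity regardless of $x$) is the right way to justify the representation formula here.

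As for comparison: the paper states this lemma without proof, so there is nothing to compare against. Your proof is the standard one and would be an appropriate addition.
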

Let $\eta_k(\cdot)=\eta(\frac{\cdot}{k})$, $k=1,2,\ldots$. Then the
following identity is obvious
\begin{lemma}\label{Lemma51}
For any $s\in (0,2)$,
\[
(-\Delta)^{\frac{s}{2}}\eta_k(x)=\frac{1}{k^s}(-\Delta)^{\frac{s}{2}} 
\eta(z)|_{z=\frac{x}{k}}.
\]
\end{lemma}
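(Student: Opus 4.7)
The plan is to derive the stated scaling identity from the Fourier-multiplier definition of the fractional Laplacian, since for $s \in (0,2)$ the operator $(-\Delta)^{s/2}$ on sufficiently regular functions on $\R^2$ is defined (up to absolute constants) as the operator whose Fourier symbol is $|\xi|^s$. Once this definition is in place, the identity is a straightforward change of variable, so I expect no real obstacle beyond verifying that $\eta_k$ is regular enough for the computation to be rigorous (which it is, since $\eta \in C^\infty$ and $\eta_k$ together with all its derivatives is bounded).

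First I would record the elementary scaling rule for the Fourier transform in $\R^2$: if $g(x) = \eta(x/k)$, then $\widehat{g}(\xi) = k^2 \widehat{\eta}(k\xi)$. Multiplying by the symbol of the fractional Laplacian gives
\begin{equation*}
\widehat{(-\Delta)^{s/2} \eta_k}(\xi) \;=\; |\xi|^s \, k^2\, \widehat{\eta}(k\xi) \;=\; k^{-s}\, k^2\, |k\xi|^s\, \widehat{\eta}(k\xi).
\end{equation*}
Setting $h := (-\Delta)^{s/2}\eta$, the right-hand side is exactly $k^{-s}\cdot k^2\, \widehat{h}(k\xi)$, which, by the same scaling rule applied in the opposite direction, is the Fourier transform of the function $x\mapsto k^{-s}\, h(x/k)$. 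Inverting the Fourier transform yields
\begin{equation*}
(-\Delta)^{s/2} \eta_k(x) \;=\; \frac{1}{k^s}\, \bigl[(-\Delta)^{s/2}\eta\bigr]\!\left(\tfrac{x}{k}\right),
\end{equation*}
which is the claimed formula.

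As an alternative, essentially equivalent route, one can argue directly from the singular-integral representation
\begin{equation*}
(-\Delta)^{s/2} f(x) \;=\; c_{2,s}\;\mathrm{PV}\!\!\int_{\R^2} \frac{f(x) - f(y)}{|x-y|^{2+s}}\, dy,
\end{equation*}
apply it to $f=\eta_k$, and perform the substitution $y = kz$; the Jacobian gives $k^2$ while the denominator contributes $k^{2+s}$, and the overall factor $k^{2-(2+s)} = k^{-s}$ produces the same identity. This second route has the minor advantage of making the pointwise, non-distributional meaning of the identity explicit, and it also meshes with the bound in Lemma \ref{Lemma5} (which is naturally proved via the same singular-integral representation). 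The whole lemma is thus a one-line scaling computation, and the only thing worth double-checking is the admissibility of taking $(-\Delta)^{s/2}$ of $\eta$ (which is granted by Lemma \ref{Lemma5}) and of $\eta_k$ (which follows since $\eta_k$ is Schwartz-regular in the sense that it is smooth, bounded, and has bounded derivatives, so the principal-value integral converges).
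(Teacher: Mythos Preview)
Your argument is correct. The paper itself gives no proof of this lemma at all: it simply introduces the identity with the sentence ``Then the following identity is obvious,'' so there is no approach to compare against. Both of your routes are standard and valid; the singular-integral computation is the cleaner of the two here, since $\eta$ equals $1$ outside a compact set and hence is not in $\mathscr{S}(\R^2)$, so the Fourier-side calculation should really be read in the sense of tempered distributions (or applied to $1-\eta$ and combined with $(-\Delta)^{s/2}1=0$). The pointwise principal-value argument avoids this subtlety entirely and dovetails with Lemma~\ref{Lemma5}, exactly as you note.
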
 

We  obtain now for solutions $\theta$ of \eqref{lQ-gequation} the, so called, {\it{tail estimates}} in
$L^2(\mathbb{R}^2)$ as introduced in \cite{WA}:
\begin{lemma}\label{TE}
For each $\epsilon > 0$ and
arbitrary $\theta_0 \in H^{2\alpha^-+s}(\mathbb{R}^2)$, there exist $k = k(\epsilon;
\|\theta_0\|_{L^2(\mathbb{R}^2)};\|\theta_0\|_{L^4(\mathbb{R}^2)})$ and $T = T(\epsilon;
\|\theta_0\|_{L^2(\mathbb{R}^2)};\|\theta_0\|_{L^4(\mathbb{R}^2)})$
such that the corresponding to $\theta_0$ solution $\theta(t)$ of \eqref{Q-gequation} satisfies
\begin{equation*}
\int_{O_k}|\theta(t)|^2 dx \leq \epsilon\quad \textrm{for all } t\geq T,
\end{equation*}
where $\mathcal{O}_k=\{x\in\mathbb{R}^2\colon |x|\geq k \}$.
\end{lemma}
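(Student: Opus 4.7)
The plan is to derive a differential inequality for the tail energy
$$\phi_k(t) := \int_{\mathbb{R}^2} \eta_k^2(x)\,|\theta(t,x)|^2\,dx$$
and to conclude via a Gronwall-type argument, choosing first $k$ and then $T$.

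Multiplying \eqref{lQ-gequation} by $\eta_k^2\,\theta$ and integrating over $\mathbb{R}^2$, the time-derivative and damping contributions produce $\tfrac{1}{2}\phi_k'(t) + \lambda\phi_k(t)$. The convection term is rewritten, using that $u=(-R_2\theta,R_1\theta)$ is divergence-free (Riesz transforms commute with derivatives), as $-\tfrac{1}{2}\int \theta^2\,u\cdot \nabla(\eta_k^2)\,dx$; since $|\nabla(\eta_k^2)|\leq C/k$ and both $\|u\|_{L^\infty}$ and $\|\theta\|_{L^2}$ stay uniformly bounded on the trajectory (by Lemma \ref{MaxPrinc}, Remark \ref{HS} and the Sobolev embedding $H^{2\alpha^-+s}\hookrightarrow L^\infty$ valid for $s>1$), this contribution is of order $O(1/k)$. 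The forcing term $\int f\,\eta_k^2\theta\,dx$ is split by Young's inequality into $\tfrac{\lambda}{2}\phi_k(t)$ plus $C\int_{|x|\geq k} f^2\,dx$, and the latter tends to $0$ as $k\to\infty$ since $f\in H^s\subset L^2$.

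The main obstacle is the fractional diffusion term $\kappa\int (-\Delta)^\alpha\theta\cdot\eta_k^2\theta\,dx$: unlike the local case, one cannot simply integrate by parts to get a manifestly non-negative quadratic form. The standard remedy is to symmetrise and isolate a commutator,
$$\kappa\int (-\Delta)^\alpha\theta\cdot \eta_k^2\theta\,dx = \kappa\int |(-\Delta)^{\alpha/2}(\eta_k\theta)|^2\,dx + \mathcal{R}_k,$$
where $\mathcal{R}_k$ gathers the commutator errors due to the non-locality of $(-\Delta)^\alpha$. Precisely here Lemmas \ref{Lemma5} and \ref{Lemma51} enter: they yield $\|(-\Delta)^{\alpha}\eta_k\|_{L^\infty}=O(k^{-2\alpha})$, so that combined with the uniform $L^2$ and $L^\infty$ bounds on $\theta$ one obtains $|\mathcal{R}_k|=O(k^{-2\alpha})$. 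Dropping the non-negative principal term, the fractional part contributes at most a term of order $O(k^{-2\alpha})$ on the right-hand side.

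Collecting everything yields an inequality of the form
$$\phi_k'(t) + \lambda\,\phi_k(t) \leq \delta(k), \qquad \delta(k) := \frac{C_1}{k} + \frac{C_2}{k^{2\alpha}} + C_3 \int_{|x|\geq k} f^2\,dx,$$
with $\delta(k)\to 0$ as $k\to\infty$. Gronwall's lemma gives $\phi_k(t)\leq \|\theta_0\|_{L^2}^2\,e^{-\lambda t} + \delta(k)/\lambda$. Choosing $k$ large enough that $\delta(k)/\lambda < \epsilon/2$, and then $T$ large enough that $\|\theta_0\|_{L^2}^2 e^{-\lambda T} < \epsilon/2$, the conclusion $\int_{\mathcal{O}_k}|\theta(t)|^2\,dx \leq \phi_k(t)\leq \epsilon$ follows for all $t\geq T$.
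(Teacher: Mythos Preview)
Your overall strategy is sound and leads to the same Gronwall inequality $\phi_k'+\lambda\phi_k\leq\delta(k)$, but it diverges from the paper at the two nontrivial steps.

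\textbf{Fractional diffusion.} You propose the symmetrised decomposition
\[
\kappa\int(-\Delta)^\alpha\theta\cdot\eta_k^2\theta\,dx=\kappa\|(-\Delta)^{\alpha/2}(\eta_k\theta)\|_{L^2}^2+\mathcal{R}_k
\]
and claim $|\mathcal{R}_k|=O(k^{-2\alpha})$ from Lemmas~\ref{Lemma5},~\ref{Lemma51} alone. This is where you are too quick: the remainder $\mathcal{R}_k$ is built out of commutators $[(-\Delta)^{\alpha/2},\eta_k]$ paired with $(-\Delta)^{\alpha/2}\theta$, and controlling it needs a Kato--Ponce/fractional Leibniz estimate together with an $H^\alpha$ bound on $\theta$, not merely $\|(-\Delta)^\alpha\eta_k\|_{L^\infty}$ and $L^2,L^\infty$ bounds on $\theta$. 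The conclusion is correct but the justification you give does not cover it. The paper bypasses all of this: it multiplies by $\eta_k\theta$ (not $\eta_k^2\theta$) and invokes the C\'ordoba--C\'ordoba pointwise inequality $2\phi(-\Delta)^\alpha\phi\geq(-\Delta)^\alpha(\phi^2)$ to get directly
\[
-\kappa\int(-\Delta)^\alpha\theta\cdot\theta\eta_k\,dx\leq-\tfrac{\kappa}{2}\int\theta^2(-\Delta)^\alpha\eta_k\,dx\leq\frac{const}{k^{2\alpha}}\|\theta\|_{L^2}^2,
\]
so Lemmas~\ref{Lemma5},~\ref{Lemma51} apply exactly as stated and no commutator analysis is needed.

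\textbf{Convection.} You bound the transport term via $\|u\|_{L^\infty}$, which forces the constants (hence $k,T$) to depend on the $H^{2\alpha^-+s}$ norm of $\theta_0$, contrary to the stated dependence on $\|\theta_0\|_{L^2},\|\theta_0\|_{L^4}$ only. The paper instead uses Young's inequality $\theta^2|u|\leq\tfrac12\theta^4+\tfrac12|u|^2$ together with $\|u\|_{L^2}\leq c\|\theta\|_{L^2}$ (boundedness of Riesz transforms on $L^2$), yielding a bound $\frac{const}{k}(\|\theta\|_{L^4}^4+\|\theta\|_{L^2}^2)$ that matches the lemma's hypotheses exactly.
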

\begin{proof}
Taking the scalar product of \eqref{lQ-gequation} with $\theta \eta_k$, we obtain
 \begin{equation*}
\int_{\R^2} \theta_t\theta \eta_k dx +\lambda\int_{\R^2} \theta^2 \eta_k dx =-\int_{\R^2} u\cdot \nabla\theta \theta \eta_k dx- \kappa \int_{\R^2} (-\Delta)^\alpha \theta \theta \eta_k dx +\int_{\R^2} f \theta \eta_kdx.
\end{equation*}
We will transform the components one by one. We have
\begin{equation*}
\int_{\R^2} \theta_t\theta \eta_k dx =\frac{1}{2} \frac{d}{dt}
\int_{\mathbb{R}^N} \theta^2 \eta_k dx.
\end{equation*}
Integrating by parts, due to \eqref{Rieszobs}, we get
\begin{eqnarray*}
\begin{split}
-&\int_{\R^2} u\cdot \nabla\theta \theta \eta_k dx=-\frac{1}{2}\int_{\R^2} u\cdot \nabla\theta^2 \eta_k dx=\frac{1}{2}\int_{\R^2}\bigl(\frac{\partial}{\partial x_1}\theta^2\frac{\partial}{\partial x_2}\psi\eta_k -\frac{\partial}{\partial x_2}\theta^2\frac{\partial}{\partial x_1}\psi\eta_k\,\bigr) dx\\&
=\frac{1}{2}\int_{\R^2} \bigl(-\theta^2\frac{\partial}{\partial x_1}\frac{\partial}
{\partial x_2}\psi\eta_k-\theta^2\frac{\partial}{\partial x_2}\psi\frac{\partial}{\partial x_1}\eta_k
+\theta^2\frac{\partial}{\partial x_2}\frac{\partial}
{\partial x_1}\psi\eta_k+\theta^2\frac{\partial}{\partial x_1}\psi\frac{\partial}{\partial x_2}\eta_k\,\bigr) dx
\\&\leq\frac{1}{2}\int_{\R^2}\theta^2|u| |\nabla\eta_k| dx \leq \frac{c}{k}\left(\int_{\R^2}\theta^4 dx+\int_{\R^2}|u|^2  dx\right)
\leq \frac{const}{k}\left(\int_{\R^2}\theta^4 dx+\int_{\R^2}\theta^2  dx\right).
\end{split}
\end{eqnarray*}
Thanks  to the pointwise estimate from \cite{C-C1} which states that
\begin{equation*}
\forall_{\alpha \in [0,1]} \forall_{\phi \in C^2_0(\R^N)}  \  2\phi (-\Delta)^\alpha \phi \geq (-\Delta)^\alpha(\phi^2),
\end{equation*}
 we have
 \begin{equation*}
 -\kappa \int_{\R^2} (-\Delta)^\alpha \theta \theta \eta_k dx \leq -\frac{\kappa}{2} \int_{\mathbb{R}^2}
\theta^2 (-\Delta)^\alpha(\eta_k) dx\leq \frac{const}{k^{2\alpha}}\int_{\mathbb{R}^2}
\theta^2  dx.
\end{equation*}
The last component  is transformed as follows
\begin{eqnarray*}
\int_{\R^2} f \theta \eta_k dx\leq \frac{1}{2\lambda} \int_{\R^2} f^2\eta_k dx+\frac{\lambda}{2}\int_{\R^2} \theta^2\eta_k dx\leq \frac{1}{2\lambda} \int_{\mathcal{O}_k} f^2 dx+\frac{\lambda}{2}\int_{\R^2} \theta^2\eta_k dx.
\end{eqnarray*}
Consequently, 
\begin{equation}\label{aa}
\frac{1}{2} \frac{d}{dt}
\int_{\mathbb{R}^N} \theta^2 \eta_k dx +\frac{\lambda}{2}
\int_{\mathbb{R}^N} \theta^2 \eta_k dx \leq\frac{const}{k^{2\alpha}}\int_{\mathbb{R}^2}\theta^2  dx+\frac{const}{k}\left(\int_{\R^2}\theta^4 dx+\int_{\R^2}\theta^2  dx \right)+\frac{1}{2\lambda}\int_{\mathcal{O}_k} f^2 dx.
\end{equation}
Note that, since $f\in  L^2(\R)$, then
\begin{equation}\label{aaa}
\int_{\mathcal{O}_k} f^2 dx\to 0\quad \textrm{as} \quad k\to \infty.
\end{equation}
Combining \eqref{aa}  together with \eqref{aaa}, and $L^q(\R^2)$ uniform in time estimate of solution $\theta$, $q\in[2,\infty)$,  we obtain thesis.
 \end{proof}
\begin{lemma}\label{ACH}
The semigroups $\{S_{\alpha,\lambda}(t)\}_{t\geq 0}$  are asymptotically compact in $H^{s+2\alpha^-}(\R^2)$.
\end{lemma}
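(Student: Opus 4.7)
The plan is to combine the $L^2$ tail estimate of Lemma \ref{TE} with the higher-order a priori bounds of Remark \ref{HS} and with Rellich compactness on bounded domains, in the spirit of the tail-estimate technique of \cite{WA}. Let $\{\theta_{0,n}\}$ be bounded in $H^{s+2\alpha^-}(\R^2)$ and $t_n\to\infty$, and set $\theta_n:=S_{\alpha,\lambda}(t_n)\theta_{0,n}$. By the $L^q$-remark following Theorem \ref{lsloc} together with Remark \ref{HS}, applied say with $l=s+2\alpha$, there exist $T_0>0$ and $M>0$, depending only on the initial-data bound and on $f$, such that $\|\theta_n\|_{H^{s+2\alpha}(\R^2)}\leq M$ whenever $t_n\geq T_0$.

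\textbf{Local convergence.} For each fixed $R>0$, the sequence $\{\theta_n|_{B_R}\}$ is bounded in $H^{s+2\alpha}(B_R)$, and the Rellich embedding $H^{s+2\alpha}(B_R)\hookrightarrow H^{s+2\alpha^-}(B_R)$ is compact. A diagonal extraction over $R=1,2,\ldots$ then produces a subsequence (still denoted $\theta_n$) and a limit $\theta^*\in H^{s+2\alpha^-}_{\mathrm{loc}}(\R^2)$ with $\theta_n\to\theta^*$ in $H^{s+2\alpha^-}(B_R)$ for every $R$.

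\textbf{Tail estimate in the phase-space norm.} The key step is to upgrade the $L^2$ tail of Lemma \ref{TE} to a tail in $H^{s+2\alpha^-}(\R^2)$. With the cut-off $\eta_k(x)=\eta(x/k)$ from the paragraph preceding Lemma \ref{Lemma51}, I plan to invoke the interpolation inequality
\begin{equation*}
\|\eta_k\theta_n\|_{H^{s+2\alpha^-}(\R^2)}\;\leq\; C\,\|\eta_k\theta_n\|_{L^2(\R^2)}^{1-\sigma}\,\|\eta_k\theta_n\|_{H^{s+2\alpha}(\R^2)}^{\sigma},\qquad \sigma=\frac{s+2\alpha^-}{s+2\alpha}<1.
\end{equation*}
Because $\|\eta_k\|_{W^{m,\infty}}$ is bounded uniformly in $k\geq 1$ for every $m$, a fractional product rule yields $\|\eta_k\theta_n\|_{H^{s+2\alpha}}\leq C\,M$, while $\|\eta_k\theta_n\|_{L^2}^2\leq \int_{\mathcal{O}_k}|\theta_n|^2\,dx\to 0$ uniformly in $n$ by Lemma \ref{TE}. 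Consequently $\|\eta_k\theta_n\|_{H^{s+2\alpha^-}(\R^2)}\to 0$ uniformly in $n$ as $k\to\infty$. Given $\varepsilon>0$, I would fix $k$ large enough that this tail is below $\varepsilon/2$; the complementary piece $(1-\eta_k)\theta_n$ is supported in $B_{2k}$ and Cauchy in $H^{s+2\alpha^-}(\R^2)$ by the previous paragraph, so the triangle inequality gives Cauchyness of $\{\theta_n\}$ in $H^{s+2\alpha^-}(\R^2)$ and hence asymptotic compactness of $S_{\alpha,\lambda}$.

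\textbf{Main obstacle.} The crucial technical point is precisely the fractional tail estimate: the multiplication operator by $\eta_k$ must be bounded on $H^{s+2\alpha}(\R^2)$ with constant uniform in $k\geq 1$, and for non-integer $s+2\alpha$ this requires a Kato--Ponce type product rule together with the obvious scaling $\|\partial^\beta\eta_k\|_\infty\leq k^{-|\beta|}\|\partial^\beta\eta\|_\infty$. The remaining ingredients---Rellich compactness, diagonal extraction, and standard interpolation between $L^2$ and $H^{s+2\alpha}$---are routine. A more laborious alternative would be a direct energy estimate obtained by testing \eqref{lQ-gequation} against $\eta_k(-\Delta)^{s+2\alpha^-}\theta_n$, at the price of several commutator estimates for the fractional Laplacian composed with the cut-off.
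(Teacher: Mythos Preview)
Your proposal is correct and follows essentially the same route as the paper: the paper's proof is a single sentence invoking Lemma \ref{TE}, Remark \ref{HS}, and ``interpolation inequality,'' and your argument is precisely a fleshed-out version of that---the $L^2$ tail (Lemma \ref{TE}), the uniform $H^{s+2\alpha}$ bound (Remark \ref{HS}), and interpolation between the two to control the $H^{s+2\alpha^-}$ tail, combined with Rellich compactness on bounded balls for the interior part. The technical point you flag (uniform-in-$k$ boundedness of multiplication by $\eta_k$ on $H^{s+2\alpha}$) is real but standard, exactly as you note.
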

\begin{proof} The asymptotic compactness in $H^{s+2\alpha^-}(\R^2)$ follows directly from Lemma \ref{TE}, Remark \ref{HS} and interpolation inequality.
\end{proof}
\begin{remark}
As a consequence of the abstract results in \cite{TE, Ra}, asymptotic compactness 
reported in Lemma \ref{ACH} and estimates in  Remark \ref{HS}, the semigroups $\{S_{\alpha,\lambda}(t)\}_{t\geq 0}$  possess $H^{s+2\alpha^-}(\R^2)$ global attractors $\mathcal{A}_{\alpha,\lambda}$.
\end{remark} 

\section{Critical equation \eqref{Q-gequation}; $\alpha= \frac{1}{2}$.}\label{critical}
\subsection{Passing to the limit.} A precise description of letting $\alpha \to \frac{1}{2}^+$ in the equation \eqref{Q-gequation} is given next. Technical details used in the main considerations below are discussed in the following subsection. 

In this section we consider the solutions $\theta^\alpha$ (we add the superscript for clarity) constructed in section \ref{section4} on the base space $H^s(\R^2), s >1$. Such solutions, for any $\alpha \in (\frac{1}{2},1]$, are varying continuously in $H^{1+s}(\R^2)$, $s>1$, hence also in each of the spaces $W^{1,p}(\R^2)$, $2 \leq p \leq +\infty$. 

In particular they fulfill the {\it uniform in $\alpha \in (\frac{1}{2}, 1]$} estimates in $L^p(\R^2)$ of Lemma \ref{MaxPrinc}, the {\it main information} allowing us to let $\alpha \to \frac{1}{2}^+$ in the equation \eqref{Q-gequation}. More precisely, for such solutions, we have:
\begin{equation}
\exists_{const >0} \forall_{\alpha \in (\frac{1}{2}, \frac{3}{4}]} \forall_{p \in [2,+\infty)}  \  \|\theta^\alpha \|_{L^\infty([0,T]; L^p(\R^2))} \leq const,
\end{equation}
where $T>0$ is fixed arbitrary large. To work with {\it sectorial positive} operator $A_\alpha := \kappa[(-\Delta)^\alpha + I]$, we add to both sides of \eqref{Q-gequation} the term $\kappa \theta$ to obtain:
\begin{equation}\label{6.1}
\begin{split}
&\theta^\alpha_t + u \cdot \nabla\theta^\alpha + A_\alpha \theta^\alpha = f + \kappa\theta^\alpha, \ x \in \R^2, t>0,  \\
&\theta^\alpha(0,x) = \theta_0(x),
\end{split}
\end{equation}
 We look at \eqref{6.1} as an equation in $L^p(\R^2)$, $p\geq 2$, and 'multiply' it by the test function $(A_{\alpha}^{-1})^*\phi$ where $\phi \in \mathcal{L}_{-2}^q(\R^2), \frac{1}{p}+ \frac{1}{q}=1$, to get:
\begin{equation}\label{6.22}
<[\theta^\alpha_t + u \cdot \nabla\theta^\alpha], (A_\alpha^{-1})^*\phi>_{L^p, L^q}  = -<A_\alpha\theta^\alpha, (A_\alpha^{-1})^*\phi>_{L^p, L^q} + <[f + \kappa \theta^\alpha], (A_\alpha^{-1})^*\phi>_{L^p, L^q}. 
\end{equation} 
We will discuss now the convergence of the terms in \eqref{6.22} one by one. Note that when $\alpha \to \frac{1}{2}^+$, then by Lemma \ref{lem6.2}, $(A_\alpha^{-1})^*\phi \to (A_{\frac{1}{2}}^{-1})^*\phi$ for $\phi \in \mathcal{L}^q_{-2}(\R^2)$.  Thanks to boundedness of $\theta^\alpha$ in $L^p(\R^2)$, $p \geq 2$, uniform in $\alpha \in (\frac{1}{2}, 1]$ (Lemma \ref{MaxPrinc}), adding and subtracting, we obtain:   
\begin{equation}\label{mic}
\begin{split}
<[f + \kappa \theta^\alpha], (A_\alpha^{-1})^*\phi>_{L^p, L^q} = &<[f + \kappa \theta^\alpha], (A_\alpha^{-1})^*\phi -(A_{\frac{1}{2}}^{-1})^*\phi>_{L^p, L^q} + <[f + \kappa \theta^\alpha], (A_{\frac{1}{2}}^{-1})^*\phi>_{L^p, L^q}  \\  
&\to <[f + \kappa \theta], (A_{\frac{1}{2}}^{-1})^*\phi>_{L^p, L^q},
\end{split}
\end{equation}
where $\theta$ is the weak limit of $\theta^\alpha$ in $L^p(\R^2)$ as $\alpha\to \frac{1}{2}^+$ (over a sequence $\{\alpha_n\}$ convergent to $\frac{1}{2}^+$; various sequences may lead to various weak limits). \newline 
For the intermediate term, we have the equality:
\begin{equation}
<A_\alpha\theta^\alpha, (A_\alpha^{-1})^*\phi>_{L^p, L^q} = <\theta^\alpha, \phi>_{L^p, L^q}.
\end{equation}  
Moreover, since $\theta^\alpha \in W^{1,p}(\R^2)$, $p \geq 2$, we have: $\theta^\alpha \in L^p(\R^2)$ and $A_{\frac{1}{2}}\theta^\alpha=: T^\alpha \in L^p(\R^2)$, which gives that $\theta^\alpha = A_{\frac{1}{2}}^{-1}T^\alpha$ for some $T^\alpha \in  L^p(\R^2)$. Inserting to the equation above we obtain:  
\begin{equation*}
<\theta^\alpha, \phi>_{L^p, L^q}  = <A_{\frac{1}{2}}^{-1}T^\alpha, \phi>_{L^p, L^q} = <T^\alpha, (A_{\frac{1}{2}}^{-1})^*\phi>_{L^p, L^q}.
\end{equation*} 
Since the left hand side has a limit as $\alpha_n \to \frac{1}{2}^+$, the right hand side also has the same limit:
\begin{equation}\label{Tdef}
<T^\alpha, (A_{\frac{1}{2}}^{-1})^*\phi>_{L^p, L^q}  \to <\Theta, (A_{\frac{1}{2}}^{-1})^*\phi>_{L^p, L^q} = <\theta, \phi>_{L^p, L^q} \  \text{for all} \  \phi \in \mathcal{L}^q_{-2}(\R^2);  
\end{equation} 
note that $\mathcal{L}^q_{-2}(\R^2)$ is dense in $L^q(\R^2)$ and that $\theta, \Theta \in L^p(\R^2)$.

Consequently, returning to \eqref{6.22}, we see that the first term also has a limit:
\begin{equation}\label{added}
<[\theta^\alpha_t + u \cdot \nabla\theta^\alpha], (A_\alpha^{-1})^*\phi>_{L^p, L^q} \to \omega_\phi, 
\end{equation}
as $\alpha \to \frac{1}{2}^+$.  \newline 
Considering the limits together, taken over the same sequence $\{\alpha_n\}$ convergent to $\frac{1}{2}^+$, we find a weak form of the limit equation:   
\begin{equation} 
\forall_{\phi\in \mathcal{L}^q_{-2}(\R^2)}  \   \omega_\phi = -<\theta, \phi>_{L^p, L^q} + <f + \kappa \theta, (A_{\frac{1}{2}}^{-1})^*\phi>_{L^p, L^q}.
\end{equation} 
Remembering that the set $\{(A_{\frac{1}{2}}^{-1})^*\phi; \phi \in \mathcal{L}^q_{-2}(\R^2) \}$ is dense in $L^q(\R^2)$, the right hand side above defines a unique element in $L^p(\R^2)$.    

{\bf Separation of terms.} The time derivative will be separated from the term $[\theta^\alpha_t+ u \cdot \nabla \theta^\alpha]$ when letting $\alpha \to \frac{1}{2}^+$. More precisely we have: 
\begin{remark}\label{explanation}
Since the approximating solutions $\theta^\alpha$ satisfy
\begin{equation}
\theta^\alpha \in L^\infty(0,T;L^p(\R^2)), \theta_t^\alpha \in L^2(0,T;L^p(\R^2)), \    \alpha \in (\frac{1}{2},\frac{3}{4}], 
\end{equation}
then by \cite[Lemma 1.1, Chapt.III]{TE1} 
\begin{equation}
\forall_{\eta \in L^q(\R^2)} <\theta_t^\alpha, \eta>_{L^p, L^q} = \frac{d}{dt}<\theta^\alpha,\eta>_{L^p, L^q} \to \frac{d}{dt}<\theta,\eta>_{L^p, L^q},
\end{equation}
(here $\frac{1}{p} + \frac{1}{q} =1$)  the derivative $\frac{d}{dt}$ and the convergence are in $\mathcal{D}'(0,T)$ (space of the 'scalar distributions'). Consequently, 
\begin{equation}
\omega_\phi = \frac{d}{dt} <\theta, (A_{\frac{1}{2}}^{-1})^*\phi>_{L^p, L^q} + \omega^1_\phi,
\end{equation}
where $\omega^1_\phi$ is a limit in $\mathcal{D}'(0,T)$ of $<u\cdot \nabla\theta^\alpha, (A_\alpha^{-1})^*\phi>_{L^p, L^q}$ over a chosen sequence $\alpha_n \to \frac{1}{2}^+$. 
\end{remark}

The construction presented above allow us to formulate the following theorem:  
\begin{theorem}\label{critth}
Let $\{ \theta^\alpha \}_{\alpha\in (\frac{1}{2},\frac{3}{4}]}$ be the set of regular $H^{s+2\alpha^-}(\R^2)$ solutions to subcritical equation \eqref{Q-gequation}. Such solutions are, in particular, bounded in each space $L^p(\R^2)$ for $p \in [2,+\infty)$, uniformly in $\alpha$. As a consequence of that and the smoothness properties of regular  solutions (they vary continuously in $W^{1,p}(\R^2)$), for arbitrary sequence $\{ \alpha_n \} \subset (\frac{1}{2},\frac{3}{4}]$ convergent to $\frac{1}{2}^+$ we can find a subsequence $\{ \alpha_{n_k} \}$ that the corresponding sequence $\{ \theta^{n_k} \}$ converges weakly in $L^p(\R^2)$ to a function $\theta$ fulfilling the equation:  
\begin{equation}\label{6.12}  
\forall_{\phi\in \mathcal{L}^q_{-2}(\R^2)}  \   \omega_\phi = -<\theta, \phi>_{L^p, L^q} + <f + \kappa \theta, (A_{\frac{1}{2}}^{-1})^*\phi>_{L^p, L^q}.
\end{equation}  
Due to denseness of the set $\mathcal{L}^q_{-2}(\R^2)$ (see Definition \ref{mathcalL}) in $L^q(\R^2), \frac{1}{p}+\frac{1}{q} =1$, the right hand side of \eqref{6.12} defines a unique element in $L^p(\R^2)$. The left hand side $\omega_\phi$ is defined in \eqref{added} and discussed in Remark \ref{explanation}. We will call such $\theta$ a {\em weak $L^p$ solution} to the critical equation \eqref{Q-gequation}, $\alpha=\frac{1}{2}$.  
\end{theorem}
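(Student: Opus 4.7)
The plan is to turn the weak-convergence scheme sketched in Subsection 6.1 into a clean extraction-and-passage-to-the-limit argument. First, Lemma \ref{MaxPrinc} yields $\|\theta^\alpha\|_{L^\infty(0,T; L^p(\R^2))} \leq C$ uniformly in $\alpha \in (\frac{1}{2}, \frac{3}{4}]$, for every $p \in [2,+\infty)$. Since $L^p(0,T; L^p(\R^2))$ is reflexive for $p > 1$, Banach-Alaoglu combined with a diagonal extraction over a countable dense family of test functions produces a subsequence $\alpha_{n_k} \to \frac{1}{2}^+$ and a limit $\theta \in L^p$ with $\theta^{n_k} \rightharpoonup \theta$ simultaneously for all the pairings to be used below.

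Next I would rewrite the subcritical problem in the shifted sectorial form \eqref{6.1} and test by $(A_\alpha^{-1})^*\phi$ with $\phi \in \mathcal{L}^q_{-2}(\R^2)$, $\frac{1}{p}+\frac{1}{q}=1$, obtaining the weak identity \eqref{6.22}. The three groups of terms are handled in turn. For the source side, Lemma \ref{lem6.2} gives $(A_\alpha^{-1})^*\phi \to (A_{\frac{1}{2}}^{-1})^*\phi$ in $L^q(\R^2)$; combined with the uniform $L^p$ bound and the weak convergence of $\theta^{n_k}$, the telescoping \eqref{mic} delivers the desired convergence. For the middle term I would invoke the exact cancellation $\langle A_\alpha\theta^\alpha, (A_\alpha^{-1})^*\phi\rangle = \langle \theta^\alpha,\phi\rangle$, which passes to the limit by pure weak convergence. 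Having established convergence of these two groups, the remaining group $\langle \theta^\alpha_t + u\cdot\nabla\theta^\alpha, (A_\alpha^{-1})^*\phi\rangle$ must also converge by subtraction, and we take this limit as the \emph{definition} of $\omega_\phi$ in \eqref{added}. Assembling the three limits produces the identity \eqref{6.12}, and density of $\{(A_{\frac{1}{2}}^{-1})^*\phi : \phi \in \mathcal{L}^q_{-2}(\R^2)\}$ in $L^q(\R^2)$ guarantees that its right-hand side defines a unique element of $L^p(\R^2)$.

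The main obstacle — and the reason the conclusion is only a \emph{weak $L^p$ solution} rather than a weak solution in the classical PDE sense — is that weak $L^p$ convergence is too feeble to pass to the limit inside the quadratic drift $R_j\theta^\alpha \,\partial_{x_i}\theta^\alpha$. A genuine Aubin-Lions step would require uniform-in-$\alpha$ control of $\theta^\alpha_t$ in a negative-order space as $\alpha \to \frac{1}{2}^+$, but the higher-Sobolev bounds in \eqref{3.32} degenerate in this limit and provide no such control. Consequently the nonlinear contribution $\omega^1_\phi$ isolated via Remark \ref{explanation} cannot be rewritten as an explicit functional of $\theta$; it persists as the abstract $\mathcal{D}'(0,T)$-limit of the subcritical drift pairings, and this is precisely the compromise encoded in the statement of the theorem.
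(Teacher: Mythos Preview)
Your proposal is correct and follows essentially the same route as the paper's Subsection~5.1: uniform $L^p$ bounds from Lemma~\ref{MaxPrinc}, testing \eqref{6.1} against $(A_\alpha^{-1})^*\phi$, passing to the limit termwise via Lemma~\ref{lem6.2} and weak convergence, and defining $\omega_\phi$ by subtraction. The only noticeable difference is that for the middle term the paper inserts the representation $\theta^\alpha = A_{\frac{1}{2}}^{-1} T^\alpha$ and tracks the auxiliary limit $\Theta$ in \eqref{Tdef}, whereas you pass $\langle \theta^\alpha,\phi\rangle \to \langle \theta,\phi\rangle$ directly from weak convergence; your shortcut is legitimate since $\phi\in\mathcal{L}^q_{-2}(\R^2)\subset L^q(\R^2)$, and the final identity \eqref{6.12} records $\langle\theta,\phi\rangle$ anyway.
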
 

\subsection{Some technicalities.} When passing to the limit in the considerations above it was important that the estimates, in particular the constants in it, can be taken {\em uniform} in $\alpha$. Therefore, in the technical lemmas below we need to care on a very precise expression of that uniformity. Even some estimates can be found in the literature, usually such uniformity is not clear from the presentation, thus we include here the proofs for completeness.

\begin{lemma}\label{negativeconv}
When $(\frac{1}{2},1] \ni \alpha \to \frac{1}{2}^+$, then $A^{-\alpha} \to A^{-\frac{1}{2}}$ pointwise on the domain of $A^{-1}$, where $A$ is a sectorial non-negative operator in a reflexive Banach space $X$.
\end{lemma}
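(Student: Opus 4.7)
The plan is to reduce the claim to the (much simpler) pointwise continuity of \emph{positive} fractional powers $A^\beta$ in $\beta$. Take any $x \in D(A^{-1})$ and put $y := A^{-1} x \in D(A)$, so that $x = Ay$. By the composition rule for fractional powers of a sectorial operator, $A^{-\alpha} x = A^{-\alpha} A y = A^{1-\alpha} y$, and in particular $A^{-1/2} x = A^{1/2} y$. Setting $\beta := 1 - \alpha$, the statement $A^{-\alpha} x \to A^{-1/2} x$ as $\alpha \to \tfrac{1}{2}^+$ becomes
$$A^\beta y \to A^{1/2} y \quad \text{as } \beta \to \tfrac{1}{2}^-,\ y \in D(A).$$

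For this I would invoke the Balakrishnan integral representation of positive fractional powers: for $\beta \in (0,1)$ and $y \in D(A)$,
$$A^\beta y = \frac{\sin(\pi \beta)}{\pi} \int_0^\infty \lambda^{\beta - 1} A(\lambda I + A)^{-1} y \, d\lambda,$$
and pass to the limit $\beta \to \tfrac{1}{2}^-$ under the integral sign by dominated convergence. The integrand converges pointwise in $\lambda$ to $\lambda^{-1/2} A(\lambda + A)^{-1} y$, so what remains is to produce a $\beta$-independent integrable majorant. Using the two identities $A(\lambda + A)^{-1} = I - \lambda(\lambda + A)^{-1}$ and $A(\lambda + A)^{-1} y = (\lambda + A)^{-1} A y$, together with the sectoriality estimate $\|\lambda(\lambda + A)^{-1}\| \leq M$, one obtains
$$\|A(\lambda + A)^{-1} y\| \leq (1 + M)\|y\| \quad \text{and} \quad \|A(\lambda + A)^{-1} y\| \leq \tfrac{M}{\lambda}\|Ay\|.$$
Restricting $\beta$ to a compact neighbourhood of $\tfrac{1}{2}$, say $\beta \in [1/4, 3/4]$, the first bound handles $\lambda \in (0,1]$ (integrand $\lesssim \lambda^{-3/4}$) and the second handles $\lambda \in [1,\infty)$ (integrand $\lesssim \lambda^{-5/4}$); both are in $L^1(0,\infty)$.

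The main technical point is securing this uniform-in-$\beta$ integrable majorant; once it is in hand, dominated convergence together with the continuity of $\beta \mapsto \sin(\pi \beta)$ yields $A^\beta y \to A^{1/2} y$, and undoing the reduction gives $A^{-\alpha} x \to A^{-1/2} x$ for every $x \in D(A^{-1})$. A minor subtlety is that the lemma allows $A$ to be merely non-negative, so $0$ may lie in $\sigma(A)$ and $(\lambda + A)^{-1}$ need not stay bounded as $\lambda \to 0^+$; but since we work only on $D(A^{-1}) = R(A)$, the factorisation $x = Ay$ absorbs this through $(\lambda + A)^{-1} x = y - \lambda(\lambda + A)^{-1} y$, which remains uniformly bounded in $\lambda$.
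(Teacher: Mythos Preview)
Your proof is correct and, after the substitution $\beta=1-\alpha$, coincides with the paper's argument at the level of ingredients: the same Balakrishnan integral, the same two resolvent bounds $\|A(\lambda+A)^{-1}y\|\le(1+M)\|y\|$ and $\|A(\lambda+A)^{-1}y\|\le M\lambda^{-1}\|Ay\|$ (which are exactly the paper's estimates \eqref{dwa} with $\phi=Ay$), and the same restriction to $D(A^{-1})$. The only difference is packaging: you invoke dominated convergence with a uniform majorant on $[\tfrac14,\tfrac12)$, whereas the paper writes out the equivalent $\epsilon$--$L$ splitting by hand; neither route buys anything the other does not.
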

\begin{proof}
Indeed, using the formula (e.g. \cite[p. 175]{M-S}) for negative powers of non-negative operators:
\begin{equation*}
A^{-\alpha} \phi = \frac{\sin(\pi\alpha)}{\pi} \int_0^\infty \lambda^{-\alpha} (A+\lambda I)^{-1} \phi d\lambda,
\end{equation*}
we find that, for  $\phi \in D(A^{-1})$ and any $L >1$,  
\begin{equation}\label{5.7}
\| A^{-\alpha} \phi - A^{-\frac{1}{2}}\phi \|_X = \frac{1}{\pi} \|\int_0^\infty \bigl[\frac{\sin(\pi\alpha)}{\lambda^\alpha} - \frac{1}{\lambda^{\frac{1}{2}}}\bigr] (A+\lambda I)^{-1} \phi d\lambda\|_X \leq \frac{1}{\pi} \|\int_0^L + \int_L^\infty \|_X.
\end{equation}
Further we will need the following two estimates:
\begin{equation}\label{dwa}
\begin{split}
&\|(A+\lambda I)^{-1} \|_X \leq \frac{M}{|\lambda|},  \\
&\|(A+\lambda I)^{-1} \phi\|_X = \|A(A+\lambda I)^{-1} A^{-1}\phi\|_X = \|[I - \lambda (A+\lambda I)^{-1}] A^{-1}\phi\|_X \leq  (M+1)\|A^{-1} \phi\|_X.
\end{split}
\end{equation}
Consequently we will estimate the right hand side terms in  \eqref{5.7} as follows:
\begin{equation*}
\frac{1}{\pi} \int_0^L \bigl|\frac{\sin(\pi\alpha)}{\lambda^\alpha} - \frac{1}{\lambda^{\frac{1}{2}}}\bigr| \|(A+\lambda I)^{-1} \phi\|_X d\lambda
\leq \frac{M+1}{\pi} \|A^{-1} \phi\|_X \int_0^L \lambda^{-\alpha} \bigl|\sin(\pi\alpha) - \lambda^{\alpha -\frac{1}{2}} \bigr| d\lambda,
\end{equation*}
\begin{equation*}
\frac{1}{\pi} \int_L^\infty \bigl|\frac{\sin(\pi\alpha)}{\lambda^\alpha} - \frac{1}{\lambda^{\frac{1}{2}}}\bigr| \|(A+\lambda I)^{-1} \phi\|_X d\lambda
\leq \frac{M\|\phi \|_X}{\pi} \int_L^\infty \left(\frac{1}{\lambda^{\alpha+1}} +\frac{1}{\lambda^{\frac{3}{2}}} \right) d\lambda  
\leq \frac{4M\| \phi\|_X}{L^{\frac{1}{2}}\pi}. 
\end{equation*}
We will chose now $L>1$ so large that the second term is less than $\epsilon\| \phi\|_X$, next chose $\alpha > \frac{1}{2}$ so close to $\frac{1}{2}$ that the first term is less than $\epsilon \|A^{-1}\phi\|_X$. Consequently,
\begin{equation}\label{6.16}
\| A^{-\alpha} \phi - A^{-\frac{1}{2}}\phi \|_X \leq \epsilon \left(\| \phi\|_X + \|A^{-1}\phi\|_X\right).
\end{equation}
This proves $A^{-\alpha} \to A^{-\frac{1}{2}}$ pointwise in the domain of $A^{-1}$. Indeed,  recall here the general definition of the domain of linear operator in the space $X$ applied to $A^{-1}$:
\begin{equation*}
D(A^{-1}) = \{\phi \in X; A^{-1}\phi \in X \}.
\end{equation*}
Thus, the convergence in  \eqref{6.16} holds for $\phi \in D(A^{-1})$.
\end{proof}
In fact, when passing to the limit in \eqref{6.1}, we need a variant of the above lemma which uses the formula (e.g. \cite[p. 117]{M-S}):
\begin{equation*}
(I + A^\alpha)^{-1} = \frac{\sin(\alpha\pi)}{\pi} \int_0^\infty \frac{\lambda^\alpha}{\lambda^{2\alpha}+ 2\lambda^\alpha cos(\pi\alpha)+1} (\lambda I+A)^{-1} d\lambda,
\end{equation*}
valid for $\alpha \in \mathbb{C}: \Re \alpha > |\alpha|^2$; in particular for $0<\alpha<1$. Applying the above expression to $A = (-\Delta)$, with $\alpha$ and with $\frac{1}{2}$, we have:
\begin{lemma}\label{lem6.2} 
Let $p\in[2,+\infty)$ be arbitrary. When $(\frac{1}{2},1] \ni \alpha \to \frac{1}{2}^+$, then $A_\alpha^{-1} := \frac{1}{\kappa}[(-\Delta)^\alpha + I]^{-1} \to A_{\frac{1}{2}}^{-1}$ pointwise in 
\begin{equation*}
D((-\Delta)^{-1}) = \{\phi \in L^p(\R^2); (-\Delta)^{-1}\phi \in L^p(\R^2)\} = \mathcal{L}^p_{-2}(\R^2);  
\end{equation*}
compare \eqref{elpe} for the definition of $\mathcal{L}^p_{-2}(\R^2)$. The same proof works for a general sectorial non-negative operator in a reflexive Banach space $X$.
\end{lemma}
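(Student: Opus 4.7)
The plan is to mimic the proof of Lemma \ref{negativeconv}, but applied to the resolvent representation of $(I+(-\Delta)^\alpha)^{-1}$ recalled just above the statement. Writing
\[
g_\alpha(\lambda) := \frac{\sin(\alpha\pi)\,\lambda^\alpha}{\lambda^{2\alpha}+2\lambda^\alpha\cos(\pi\alpha)+1}, \qquad g_{\frac{1}{2}}(\lambda)=\frac{\lambda^{1/2}}{\lambda+1},
\]
(the second is the specialization of the first at $\alpha=\tfrac12$, using $\sin\tfrac{\pi}{2}=1$, $\cos\tfrac{\pi}{2}=0$), I would express
\[
\kappa\bigl(A_\alpha^{-1}\phi - A_{\frac{1}{2}}^{-1}\phi\bigr) = \frac{1}{\pi}\int_0^\infty \bigl[g_\alpha(\lambda)-g_{\frac{1}{2}}(\lambda)\bigr](\lambda I-\Delta)^{-1}\phi\, d\lambda
\]
for $\phi\in\mathcal{L}^p_{-2}(\R^2)$, and then split the integral at some $L>1$.

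For the tail $[L,\infty)$, I would use the $\alpha$-independent estimate $\|(\lambda I-\Delta)^{-1}\|_{L^p\to L^p}\le M/\lambda$ (the first inequality in \eqref{dwa}), together with the asymptotics $g_\alpha(\lambda)=O(\lambda^{-\alpha})$ as $\lambda\to\infty$. For $\alpha\in(\tfrac12,\tfrac34]$ the integrand is dominated by $C\lambda^{-3/2}\|\phi\|_{L^p}$, hence the tail is bounded by $CL^{-1/2}\|\phi\|_{L^p}$ uniformly in $\alpha$, and can be made smaller than $\varepsilon\|\phi\|_{L^p}$ by choosing $L$ large.

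For the bounded part $[0,L]$, I would use the second line of \eqref{dwa} in the form
\[
\|(\lambda I-\Delta)^{-1}\phi\|_{L^p}\le (M+1)\,\|(-\Delta)^{-1}\phi\|_{L^p},
\]
which is precisely where the hypothesis $\phi\in\mathcal{L}^p_{-2}(\R^2)=D((-\Delta)^{-1})$ is used. Next I would observe the elementary algebraic identity
\[
\lambda^{2\alpha}+2\lambda^\alpha\cos(\pi\alpha)+1=(\lambda^\alpha+\cos(\pi\alpha))^2+\sin^2(\pi\alpha)\ge\sin^2(\pi\alpha),
\]
so the denominator of $g_\alpha$ stays bounded below by a positive constant, uniformly as $\alpha\to\tfrac12^+$. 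Consequently $g_\alpha$ is uniformly bounded on $[0,L]$, converges pointwise (in fact uniformly) to $g_{\frac{1}{2}}$ as $\alpha\to\tfrac12^+$ by continuity of the defining trigonometric/power expressions, and the dominated convergence theorem yields
\[
\Bigl\|\int_0^L\bigl[g_\alpha(\lambda)-g_{\frac{1}{2}}(\lambda)\bigr](\lambda I-\Delta)^{-1}\phi\, d\lambda\Bigr\|_{L^p}\xrightarrow[\alpha\to\frac{1}{2}^+]{}0.
\]
Choosing $L$ first and then $\alpha$ sufficiently close to $\tfrac12$ gives $\|A_\alpha^{-1}\phi-A_{\frac{1}{2}}^{-1}\phi\|_{L^p}\to 0$, which is the claimed pointwise convergence.

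The main obstacle is to secure the uniformity in $\alpha$ simultaneously on both pieces: the lower bound $\sin^2(\pi\alpha)$ on the denominator (preventing a blow-up of $g_\alpha$ near $\lambda=0$) and the uniform tail domination by $\lambda^{-3/2}$ (which forces the restriction $\alpha\in(\tfrac12,\tfrac34]$). Since all estimates used — the resolvent bound $M/\lambda$, the factorization through $(-\Delta)^{-1}$, and the integral formula for $(I+A^\alpha)^{-1}$ — hold verbatim for any sectorial non-negative operator $A$ in a reflexive Banach space $X$, the identical argument delivers the final sentence of the lemma.
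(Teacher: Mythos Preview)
Your proposal is correct and follows essentially the same route as the paper's own proof: the same integral representation, the same splitting at $L$, the same two resolvent bounds from \eqref{dwa}, the same restriction $\alpha\in(\tfrac12,\tfrac34]$ for uniform tail control, and the same completing-the-square lower bound on the denominator (your $\sin^2(\pi\alpha)$ is precisely what yields the paper's $1-\cos^2(\tfrac{3\pi}{4})=\tfrac12$). The only cosmetic difference is that you phrase the $(0,L)$ step via dominated convergence of $g_\alpha\to g_{1/2}$, whereas the paper simply bounds the integral and lets $\alpha\to\tfrac12^+$ directly; the content is identical.
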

\begin{proof}
We need to estimate the difference:
\begin{equation*}
[I+(-\Delta)^\alpha]^{-1} - [I+(-\Delta)^{\frac{1}{2}}]^{-1} = \frac{1}{\pi} \int_0^\infty \bigl[ \frac{\sin(\pi\alpha) \lambda^\alpha}{\lambda^{2\alpha}+ 2\lambda^\alpha \cos(\pi\alpha)+1} - \frac{\sin(\frac{\pi}{2}) \lambda^\frac{1}{2}}{\lambda+ 2\lambda^\frac{1}{2} \cos(\frac{\pi}{2})+1} \bigr] (\lambda I + (-\Delta))^{-1} d\lambda.
\end{equation*}
Now the proof goes as in the previous lemma separating the integral into two parts; over $(0,L)$ and $(L,\infty)$, and estimating them using \eqref{dwa}. More precisely, for $\phi \in D((-\Delta)^{-1})$, we have:
\begin{equation*}
\begin{split}
\frac{1}{\pi} \int_0^L \bigl| &\frac{\sin(\pi\alpha) \lambda^\alpha}{\lambda^{2\alpha} + 2\lambda^\alpha \cos(\pi\alpha)+1} - \frac{\lambda^\frac{1}{2}}{\lambda+1} \bigr| \|((-\Delta)+\lambda I)^{-1} \phi\|_{L^p(\R^2)} d\lambda     \\
&\leq \frac{M+1}{\pi} \|(-\Delta)^{-1} \phi\|_{L^p(\R^2)} \int_0^L \bigl| \frac{\sin(\pi\alpha) \lambda^\alpha}{\lambda^{2\alpha}+ 2\lambda^\alpha \cos(\pi\alpha)+1} - \frac{ \lambda^\frac{1}{2}}{\lambda+ 1} \bigr| d\lambda,
\end{split}
\end{equation*}
\begin{equation*}
\begin{split}
\frac{1}{\pi} \int_L^\infty &\bigl| \frac{\sin(\pi\alpha) \lambda^\alpha}{\lambda^{2\alpha}+ 2\lambda^\alpha \cos(\pi\alpha)+1} - \frac{\lambda^\frac{1}{2}}{\lambda+1} \bigr| \|((-\Delta)+\lambda I)^{-1} \phi \|_{L^p(\R^2)} d\lambda \\
&\leq \frac{M\| \phi\|_{L^p(\R^2)}}{\pi} \int_L^\infty  \left( \frac{ \lambda^{-1-\alpha}}{1 + 2\lambda^{-\alpha} \cos(\pi\alpha)+ \lambda^{-2\alpha}} + \frac{ \lambda^{-\frac{1}{2}}}{\lambda+1}  \right) d\lambda.
\end{split}
\end{equation*}
Note that the two terms under the final integral are bounded, moreover uniformly in $\alpha \in (\frac{1}{2},\frac{3}{4}]$. For such a range of parameter we have:
\begin{equation*}
 \frac{ \lambda^{-1-\alpha}}{1 + 2\lambda^{-\alpha} \cos(\pi\alpha)+ \lambda^{-2\alpha}} \leq
\frac{\lambda^{-1-\alpha}}{1 - \cos^2(\frac{3\pi}{4})}= \frac{2}{\lambda^{1+\alpha}}.
\end{equation*}
We will chose next $L>1$ so large that the integral over $(L,\infty)$ is less than $\kappa\epsilon\| \phi\|_{L^p(\R^2)}$, next chose $\alpha > \frac{1}{2}$ so close to $\frac{1}{2}$ that the integral over $(0,L)$ is less than $\kappa\epsilon \|(-\Delta)^{-1}\phi\|_{L^p(\R^2)}$. Consequently,
\begin{equation*}
\frac{1}{k}\| [I+(-\Delta)^\alpha]^{-1}\phi - [I+(-\Delta)^{\frac{1}{2}}]^{-1}\phi \|_{L^p(\R^2)} \leq \epsilon \left(\| \phi\|_{L^p(\R^2)} + \|(-\Delta)^{-1}\phi\|_{L^p(\R^2)}\right).
\end{equation*}
This proves $A_\alpha^{-1}\to A_{\frac{1}{2}}^{-1}$ pointwise in $\mathcal{L}^p_{-2}(\R^2)$.      
\end{proof}

When passing to the limit constructing weak solution to the critical equation \eqref{6.1}, we were using the 'test functions' $\phi \in \mathcal{L}^q_{-2}(\R^2)$. In that calculations we need to be sure that for such $\phi$, $(A_\alpha^{-1})^* \phi$ and $(A_{\frac{1}{2}}^{-1})^*\phi$ are in $L^q(\R^2)$ 
(rather; form a dense subset). 

This property follows from the observation taken from \cite[Corollary 12.2.5 (iii)]{M-S}, or \cite[Corollary 4.7]{M-S-P}, which states that:
\begin{corollary}
If $1<q<+\infty$ and $0< \nu < \frac{N}{2q}$, then $[R_\nu]_q = (-\Delta_q)^{-\nu}$ is a sectorial operator of amplitude $\frac{\pi}{2}$.   
Moreover, if $\nu, \beta \in \mathbb{C}$ are such that $0< \Re\nu< \Re\beta < \frac{N}{2q}$, then
\begin{equation}
D([R_\beta]_q) \subset D([R_\nu]_q).
\end{equation}

We immediately obtain from the above that: $\mathcal{L}^q_{-2}(\R^2) \subset \mathcal{L}^q_{-2\alpha}(\R^2) \subset \mathcal{L}^q_{-1}(\R^2)$ when $(1,2] \ni q$ (note, that we consider $\alpha \in (\frac{1}{2}, \frac{3}{4}]$ and $p \geq 2$, and $\frac{1}{p}+ \frac{1}{q} =1$ so that $q$ near 1 are interesting). 
\end{corollary}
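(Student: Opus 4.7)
The Corollary decomposes cleanly into two parts. The first (sectoriality of $[R_\nu]_q$ of amplitude $\pi/2$, and the abstract domain inclusion $D([R_\beta]_q) \subset D([R_\nu]_q)$ whenever $0 < \Re\nu < \Re\beta < \frac{N}{2q}$) is a verbatim transcription of \cite[Corollary 12.2.5(iii)]{M-S}/\cite[Corollary 4.7]{M-S-P}; I would simply invoke it. The only content that actually needs to be proved here is the concluding chain $\mathcal{L}^q_{-2}(\R^2) \subset \mathcal{L}^q_{-2\alpha}(\R^2) \subset \mathcal{L}^q_{-1}(\R^2)$ for the relevant $q$.

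My plan is to apply the cited inclusion twice. Specializing to $N=2$, so that the threshold reads $\frac{1}{q}$, and using the identifications $\mathcal{L}^q_{-2} = D([R_1]_q)$, $\mathcal{L}^q_{-2\alpha} = D([R_\alpha]_q)$, $\mathcal{L}^q_{-1} = D([R_{1/2}]_q)$ which follow directly from the definition \eqref{elpe}, the first application takes $(\beta,\nu)=(1,\alpha)$ and delivers $\mathcal{L}^q_{-2} \subset \mathcal{L}^q_{-2\alpha}$, while the second takes $(\beta,\nu)=(\alpha,\frac{1}{2})$ and delivers $\mathcal{L}^q_{-2\alpha} \subset \mathcal{L}^q_{-1}$. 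The orderings $0 < \frac{1}{2} < \alpha < 1$ required by the hypothesis $0 < \Re\nu < \Re\beta$ are immediate from the standing range $\alpha \in (\frac{1}{2}, \frac{3}{4}]$.

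The only nontrivial point, and the main obstacle, is the upper bound $\Re\beta < \frac{1}{q}$ in the first application: taking $\beta = 1$ would formally force $q < 1$, which clashes with the hypothesis $q > 1$. I would resolve this by restricting to the dual exponents $q$ actually used in Theorem \ref{critth}; by the duality $\frac{1}{p} + \frac{1}{q} = 1$ and the condition \eqref{asonF} the exponent $p$ is taken large, so $q$ lies arbitrarily close to $1$ and, in particular, one can arrange $q < 1/\alpha$, validating the choice $(\beta,\nu)=(1,\alpha)$. The second application is looser because $\alpha < 1$ already. This is exactly the ``$q$ near $1$ are interesting'' caveat noted in the statement. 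Once both parameter conditions are met, the inclusions follow line by line from the cited result, and they in turn yield the density of $\mathcal{L}^q_{-2}(\R^2)$ in $L^q(\R^2)$ that is exploited in the proof of Theorem \ref{critth}.
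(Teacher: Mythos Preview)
Your approach matches the paper's: the Corollary is stated in the paper without proof, the first two assertions being quoted directly from \cite[Corollary 12.2.5(iii)]{M-S} and \cite[Corollary 4.7]{M-S-P}, and the chain $\mathcal{L}^q_{-2}\subset\mathcal{L}^q_{-2\alpha}\subset\mathcal{L}^q_{-1}$ being declared to follow ``immediately.'' Your identification $\mathcal{L}^q_{-2\gamma}=D([R_\gamma]_q)$ and the two applications of the domain inclusion are exactly what the paper has in mind.

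You are right to flag the upper constraint $\Re\beta<\tfrac{N}{2q}=\tfrac{1}{q}$ as the only real obstacle, and the paper itself does not comment on it. However, your proposed resolution does not work as written. You claim that by taking $q$ close to $1$ one can ``arrange $q<1/\alpha$, validating the choice $(\beta,\nu)=(1,\alpha)$.'' But the hypothesis of the cited result requires $\Re\beta<\tfrac{1}{q}$; with $\beta=1$ this reads $1<\tfrac{1}{q}$, i.e.\ $q<1$, which is impossible for any $q\in(1,2]$ no matter how close to $1$. The condition $q<1/\alpha$ you obtain only gives $\alpha<\tfrac{1}{q}$, which is what is needed for the \emph{second} application $(\beta,\nu)=(\alpha,\tfrac{1}{2})$, not the first. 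So the inclusion $\mathcal{L}^q_{-2}\subset\mathcal{L}^q_{-2\alpha}$ is not covered by the quoted result in the parameter range stated, and your argument (like the paper's one-line claim) leaves this endpoint case unjustified; one would need a separate argument, or a version of the domain inclusion without the restriction $\Re\beta<\tfrac{N}{2q}$, to close this gap.
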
 

The information reported in Lemma \ref{lem6.2} will be used for letting $\alpha \to \frac{1}{2}^+$ in the equation \eqref{Q-gequation} provided the test functions will vary in $D((-\Delta)^{-1})$. We recall a characterization of this domain. We start with the definition of the spaces $\mathcal{L}^p_{\gamma}(\R^N)$ (e.g. \cite[p. 1164]{W2}); note that the definition in \cite[Chapt. V]{S}, using the powers of $(I-\Delta)^{-\frac{1}{2}}$ instead of $(-\Delta)^{-\frac{1}{2}}$, leads to usual fractional order Sobolev spaces.  
\begin{definition}\label{mathcalL}
For $1\leq p \leq \infty$ and $\gamma \in \R$ we define
\begin{equation}\label{elpe}
\mathcal{L}^p_{\gamma}(\R^N) = \{f \in L^p(\R^N)\colon f = I^\gamma g := (-\Delta)^{-\frac{\gamma}{2}}g \  \text{for certain} \  g \in L^p(\R^N) \},
\end{equation}
normed by $\| f \|_{\mathcal{L}^p_{\gamma}(\R^N)} = \| g \|_{L^p(\R^N)}$.
\end{definition}
It is known (e.g. \cite[p. 303]{M-S}), that when $0 <\gamma <m, 1 <p < \infty$, then:
\begin{equation*}
\mathcal{L}^p_\gamma(\R^2) \doteq  \mathcal{S}^{\gamma,p}(\R^N) \doteq [L^p(\R^N), W^{m,p}(\R^N)]_{\frac{\gamma}{m}},
\end{equation*}
the last bracket stands for the complex interpolation.  

Following \cite[\S 25]{S-K-M} we recall here the description of the space $D((-\Delta)^{-1}) = I^2(L^p(\R^N)) \cap L^p(\R^N)$ (we are using it in Lemma \ref{lem6.2}).   
\begin{observation}\label{Lizorkin}
Characterization of the spaces $L^{\gamma}_{p,r}(\R^N)$ for arbitrary $\gamma >0$, $1\leq p < \infty$, $1\leq r <\infty$, was given in \cite{S-K-M} \S 25, $7^o$. The auxiliary spaces there are the {\em Lizorkin spaces}
\begin{equation*}
\Phi = \{\phi \in  \mathscr{S}(\R^N)\colon (D^j\phi)(0) =0,\, |j| = 0,1,2,...\}, 
\end{equation*}
which contain in particular $\phi(x) = \exp(-|x|^2 -|x|^{-2})$. Next, the class $\Psi$, consisting of the Fourier transforms of all elements of $\Phi$ is introduced:  
\begin{equation*}
\Psi = \mathcal{F}(\Phi) = \{\psi \in  \mathscr{S}(\R^N)\colon \psi = \mathcal{F}(\phi),\, \phi\in \Phi\}.
\end{equation*}

The last class is characterized as all the {\it Schwartz functions} $\psi$, that are orthogonal to the polynomials:
\begin{equation*}
\int_{\R^N} x^j \psi(x) dx = 0, \quad  |j|=0,1,2,... .
\end{equation*}
Denoting 
\begin{equation*}
I^\gamma(L^p(\R^N)) = \{f\colon f= I^\gamma(\psi),\, \psi \in L^p(\R^N)\},
\end{equation*}
the class of the {\it Riesz potentials} of functions in $L^p(\R^N)$, when $\gamma \geq N$, the space $I^\gamma(L^p(\R^N))$ is understood in a sense of distributions on the Lizorkin class $\Psi$:
\begin{equation*}
<I^\gamma \psi,\omega> = <\psi, I^\gamma\omega>, \quad  \omega \in \Psi.
\end{equation*} 
\end{observation}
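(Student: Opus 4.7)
The observation collects classical material from \cite[\S 25]{S-K-M}, so the proposed proof is really a verification of three separate claims rather than the discovery of something new. I would organize the argument in three steps, corresponding to (i) nonemptiness of $\Phi$, (ii) the moment characterization of $\Psi$, and (iii) well-posedness of the distributional pairing defining $I^\gamma$ for $\gamma\geq N$.

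First I would verify that the explicit function $\phi(x)=\exp(-|x|^2-|x|^{-2})$, extended by $0$ at the origin, lies in $\Phi$. Smoothness and Schwartz decay follow from the Gaussian factor at infinity and the fact that the essential singularity $e^{-|x|^{-2}}$ is flat at $0$: an induction shows that each $D^j\phi$ is a finite sum of the form $P_j(x,|x|^{-1})\,e^{-|x|^2-|x|^{-2}}$ with $P_j$ polynomial in the two arguments, and the exponential factor kills any polynomial growth in $|x|^{-1}$ as $x\to 0$. Hence $(D^j\phi)(0)=0$ for all multi-indices $j$.

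Next I would establish that $\Psi=\mathcal{F}(\Phi)$ consists exactly of Schwartz functions orthogonal to every polynomial. For $\phi\in\mathscr{S}(\R^N)$ with Fourier transform $\psi$, differentiating under the integral and evaluating at the origin gives
\begin{equation*}
(D^j\phi)(0)=c_{N}\int_{\R^N}(i\xi)^j\psi(\xi)\,d\xi,
\end{equation*}
so vanishing of all derivatives of $\phi$ at $0$ is equivalent to vanishing of all monomial moments of $\psi$, which by linearity is equivalent to $\int x^j\psi(x)\,dx=0$ for all $|j|\geq 0$. This yields the stated characterization of $\Psi$.

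The main obstacle, and the only nontrivial step, is step (iii): making sense of $I^\gamma\psi$ for $\psi\in L^p(\R^N)$ and $\gamma\geq N$, where the Riesz kernel $|x|^{\gamma-N}$ fails to be locally integrable in the ordinary sense. The plan is to show that $I^\gamma$ maps $\Psi$ continuously into itself, so that the pairing $\langle I^\gamma\psi,\omega\rangle:=\langle\psi,I^\gamma\omega\rangle$ is well-defined for every $\omega\in\Psi$. On the Fourier side $I^\gamma\omega=\mathcal{F}^{-1}(c|\xi|^{-\gamma}\hat\omega)$, and by step (ii) applied to $\omega$ the function $\hat\omega$ vanishes to infinite order at $\xi=0$; hence $|\xi|^{-\gamma}\hat\omega$ is again Schwartz and still orthogonal to every polynomial, so $I^\gamma\omega\in\Psi$ with continuous dependence. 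Cauchy–Schwartz-type continuity estimates in $\mathscr{S}(\R^N)$ then guarantee that $\langle\psi,I^\gamma\omega\rangle$ is a continuous linear functional on $\Psi$, which is precisely the distributional meaning attached to $I^\gamma\psi$ in \cite{S-K-M}. The identification $D((-\Delta)^{-1})=I^2(L^p(\R^N))\cap L^p(\R^N)=\mathcal{L}^p_{-2}(\R^N)$ used in Lemma \ref{lem6.2} then follows by taking $\gamma=2$ and intersecting with $L^p$.
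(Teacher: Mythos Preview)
The paper does not prove this Observation at all: it is stated as a recollection of classical facts from \cite[\S 25]{S-K-M}, with no argument given. Your proposal therefore goes well beyond what the paper does, supplying verifications where the authors simply cite the reference. That is fine, and steps (i) and (ii) are correct as written.

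There is, however, a slip in step (iii). You assert that $|\xi|^{-\gamma}\hat\omega$ is ``still orthogonal to every polynomial''; that would mean $|\xi|^{-\gamma}\hat\omega\in\Psi$, and then $I^\gamma\omega=\mathcal{F}^{-1}(|\xi|^{-\gamma}\hat\omega)$ would land in $\Phi$, not in $\Psi$. What you actually need (and what is true) is the other property: since $\omega\in\Psi$, step (ii) gives $\hat\omega\in\Phi$, i.e.\ $\hat\omega$ vanishes to infinite order at $\xi=0$; multiplication by $|\xi|^{-\gamma}$ preserves this flatness and the Schwartz decay, so $|\xi|^{-\gamma}\hat\omega\in\Phi$. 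Because $\mathcal{F}^{-1}$ maps $\Phi$ to $\Psi$ (the same Fourier duality you used in step (ii), applied in reverse), you recover $I^\gamma\omega\in\Psi$ and the rest of your argument goes through. The conclusion is unaffected; you have just interchanged the roles of $\Phi$ and $\Psi$ in one sentence.
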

Description of the class $I^\gamma(L^p(\R^N))$, $\gamma >0$, $1<p<\infty$, is given in \cite[Theorem 26.8, 2)]{S-K-M}:
\begin{proposition}
Let $f$ be locally integrable and $\lim_{|x|\to\infty} f(x) = 0$. Then $f \in I^\gamma(L^p(\R^N))$, $\gamma>0$,    $1<p<\infty$, if and only if 
\begin{itemize}
\item When $1<p<\frac{N}{\gamma}$, then $f \in L^q(\R^N)$ with $q= \frac{Np}{N-\gamma p}$ and $D^\gamma f \in L^p(\R^N)$,  
\item When $p\geq \frac{N}{\gamma}$, then  
$$
(\triangle^l_h f) \in L^p(\R^N), \  \text{and there exists in} \ L^p(\R^N),  \  \lim_{\epsilon\to 0} \int_{|h|>\epsilon} |h|^{-N-\gamma} (\triangle^l_h f)(x) dh, 
$$
where $l > 2[\frac{\gamma}{2}]$. For the definition of the {\em difference operators} $(\triangle^l_h f)$ we refer to \cite[Lemma 26.5]{S-K-M}.
\end{itemize}   
Moreover, $I^\gamma(L^p(\R^N)) \cap L^r(\R^N) = L^{\gamma}_{p,r}(\R^N)$, $\gamma >0$, $1\leq p < \infty$, $1\leq r <\infty$. 
\end{proposition}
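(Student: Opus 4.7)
The plan is to follow the classical treatment of Samko--Kilbas--Marichev and split the argument according to whether $I^\gamma$ maps $L^p$ into another Lebesgue space (the subcritical range $1<p<N/\gamma$) or whether one must regularize using hypersingular integrals (the range $p\geq N/\gamma$); the \emph{moreover} part will then be obtained by intersecting the resulting characterization with $L^r$.

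For $1<p<N/\gamma$, the \emph{only if} direction uses the Hardy--Littlewood--Sobolev inequality, which yields boundedness $I^\gamma\colon L^p(\R^N)\to L^q(\R^N)$ with $q=Np/(N-\gamma p)$; thus if $f=I^\gamma g$ with $g\in L^p$, then automatically $f\in L^q$, and since on a dense subclass (for instance the Lizorkin class $\Phi$) the composition $D^\gamma I^\gamma$ reduces to the identity, one concludes $D^\gamma f=g\in L^p$. Conversely, given $f\in L^q$ with $g:=D^\gamma f\in L^p$, both $f$ and $I^\gamma g$ are tempered distributions whose Fourier transforms agree (the symbols $|\xi|^{-\gamma}$ and $|\xi|^{\gamma}$ are reciprocal away from the origin), so $f-I^\gamma g$ is supported at the origin and hence a polynomial; the hypothesis $f(x)\to 0$ at infinity then forces that polynomial to vanish.

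For $p\geq N/\gamma$ the Riesz potential $I^\gamma$ no longer acts between ordinary Lebesgue spaces and must be interpreted distributionally on the Lizorkin class $\Psi$, while the fractional derivative $D^\gamma$ is realized as a Marchaud-type hypersingular integral built from the finite differences $\triangle^l_h f$ with $l>2[\gamma/2]$ (the order $l$ chosen so that the kernel $|h|^{-N-\gamma}$ becomes integrable near $h=0$ after cancellation). The forward direction reduces to verifying, for $f=I^\gamma g$ with $g\in L^p$, that the truncated hypersingular integral converges in $L^p$ to a constant multiple of $g$ as $\epsilon\to 0^+$. The converse applies $I^\gamma$, via the duality pairing with $\Psi$, to the hypersingular limit and recovers $f$ modulo a polynomial, which the decay at infinity kills.

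Finally $I^\gamma(L^p(\R^N))\cap L^r(\R^N)=L^\gamma_{p,r}(\R^N)$ follows by comparing definitions, since the Sobolev--Liouville space on the right consists precisely of those $f\in L^r$ for which $D^\gamma f$ (in whichever incarnation is appropriate) lies in $L^p$. The principal obstacle is the supercritical case: one must justify that the truncated hypersingular integral converges in $L^p$ and that inversion through Lizorkin duality actually reproduces $f$; both steps rely essentially on the choice $l>2[\gamma/2]$, on the orthogonality of $\Psi$ to polynomials, and on the decay assumption $f(x)\to 0$, which is what suppresses the polynomial kernel of $D^\gamma$.
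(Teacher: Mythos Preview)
Your sketch is a reasonable outline of the classical Samko--Kilbas--Marichev proof, but you should be aware that the paper does not prove this proposition at all: it is quoted verbatim from \cite[Theorem 26.8, 2)]{S-K-M} and used as a black box to describe the domain $D((-\Delta)^{-1})=\mathcal{L}^p_{-2}(\R^2)$ needed in Lemma~\ref{lem6.2}. There is therefore nothing in the paper to compare your argument against; what you have written is essentially the strategy one finds in the cited reference (Hardy--Littlewood--Sobolev plus $D^\gamma I^\gamma=\mathrm{id}$ in the subcritical range, Marchaud-type hypersingular integrals and Lizorkin-class duality in the range $p\geq N/\gamma$, with the decay hypothesis killing the polynomial ambiguity).

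If you intend to actually include a proof rather than a citation, two points in your sketch would need to be fleshed out more carefully. First, in the converse direction of the subcritical case you invoke a Fourier-support argument to conclude $f-I^\gamma g$ is a polynomial; for $p\geq N/\gamma$ the object $I^\gamma g$ is only defined via the Lizorkin pairing, so the ``Fourier transforms agree away from the origin'' step has to be reformulated entirely in terms of the $\Psi$-duality, and the polynomial that survives lives in $\Psi'$ rather than $\mathscr{S}'$. Second, the forward direction in the supercritical case---showing that the truncated hypersingular integral of $I^\gamma g$ converges in $L^p$ to a constant multiple of $g$---is the genuinely technical part of \cite[Chapter~26]{S-K-M} and requires the explicit normalizing constant and a careful limiting argument; your sketch names the ingredients but does not indicate how the $L^p$-convergence is obtained. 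None of this is a gap in your strategy, only a warning that the supercritical half is where the real work lies.
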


We are using the above description for $N= \gamma=2$ with $I^2(\psi) = \int_{\R^2} \ln(\frac{1}{|x-y|}) \psi(y) dy$. We also have $\mathcal{L}^p_{\gamma}(\R^2) = L^{\gamma}_{p,p}(\R^2)$.

\section{Appendix. Properties of the Riesz operators, and estimates.}\label{4.5}
We recall first the properties of the Laplace operator in $L^p(\R^N), 1<p< +\infty$ (e.g. \cite[p.37]{M-S}). The distributional Laplacian restricted to $L^p(\R^N)$; $\Delta_p$, with the domain $D(\Delta_p) = W^{2,p}(\R^N)$, has the following properties:
\begin{itemize}
\item $\Delta_p$ is a closed operator, 
\item $(-\Delta_p)$ is m-accretive, 
\item If $1<p< +\infty$, then $(\Delta_p)$ is injective and has dense domain and range. 
\end{itemize}
As a particular consequence of the last property, the set $I^2(L^p(\R^N)) = (-\Delta)^{-1}(\R^N)$ is dense in $L^p(\R^N)$ since the range $R(-\Delta) = (-\Delta)(L^p(\R^N)) = D((-\Delta)^{-1})$ is dense.

Nice description of the Riesz operators $R_j, j=1,...,N$, is given in \cite{S}. We quote here only few of its properties. The first one says that the Riesz operator
\begin{equation*}
R_j(f)(x) = \lim_{\epsilon \to 0} c_N \int_{|y| \geq \epsilon} \frac{y_j}{|y|^{N+1}} f(x-y) dy, \  j+1,2,...,N, \  c_N = \frac{\Gamma(\frac{N+1}{2})}{\pi^{(N+1)/2}},
\end{equation*}
is well defined for $f \in L^p(\R^N), 1\leq p < \infty$, and bounded as a map from $L^p(\R^N), 1\leq p < \infty$ into itself. Moreover (\cite[p. 299]{M-S}), for $f \in L^2(\R^N) \cap L^p(\R^N), 1 < p < \infty$, they have the characterization:
\begin{equation*}
R_jf = \left(-i \frac{x_j}{|x|}\hat{f}\right)^\vee, \  1\leq j\leq N.
\end{equation*}
Further, if $f \in W^{1,p}(\R^2)$, then (e.g. \cite[p.299]{M-S})
\begin{equation*}
\frac{\partial f}{\partial x_k} = -R_k (-\Delta)^{\frac{1}{2}} f,
\end{equation*}
moreover, if $f \in C^2_0(\R^N)$, then  
\begin{equation*}
\frac{\partial^2 f}{\partial x_j \partial x_k} = -R_j R_k\Delta f.
\end{equation*}

The following {\it commutator estimate} is known in the literature: \cite[p. 61]{RE}, or \cite[p. 54]{L-P} in particular case; for $\gamma >0$,
\begin{equation}\label{commutator}
\|(-\Delta)^\gamma(FG)\|_{L^p(\R^2)} \leq c\bigl[\|(-\Delta)^\gamma F\|_{L^r(\R^2)} \|G \|_{L^q(\R^2)} + \|F\|_{L^q(\R^2)} \| (-\Delta)^\gamma G\|_{L^r(\R^2)}\bigr],
\end{equation}
where $1< p < q \leq \infty$ and $\frac{1}{r} + \frac{1}{q} = \frac{1}{p}$, and the right hand side is sensible for $F$ and $G$.

Moreover, we  quote the observation (e.g. \cite[Chapter III]{S}, \cite[p. 299]{M-S}), that {\it the Riesz transforms $R_j$ are bounded operators from $L^q(\R^N)$ into $L^q(\R^N)$, $1<q<\infty$}:
\begin{equation}\label{Rieszobs}
\exists_{C>0} \forall_{\psi \in L^q(\R^N)}  \| R_j(\psi) \|_{L^q(\R^N)} \leq C\| \psi \|_{L^q(\R^N)}, \quad j=1,2,...,N.
\end{equation}
We have also the following property taken from the paper \cite[p. 12]{W1}:
\begin{equation}\label{DU}
\| |D^j u(t,\cdot)| \|_{L^q(\R^2)} \leq c \| D^j \theta(t,\cdot)\|_{L^q(\R^2)}, \  q \in (1,\infty),\, |j| \leq k.
\end{equation}

\subsection{Properties of the fractional powers operators.}
Recall first the Balakrishnan definition of fractional power of non-negative operator (e.g. \cite[p. 299]{KO}).  Let $A$ be a closed linear densely defined operator in a  Banach space $X$, such that its resolvent set contains $(-\infty,0)$ and the resolvent satisfies:
\begin{equation*}
\| \lambda(\lambda+A)^{-1}\| \leq M, \ \lambda>0.
\end{equation*}
Then, for $\eta \in (0,1)$,
\begin{equation*}
A^\eta \phi = \frac{\sin(\pi\eta)}{\pi} \int_0^\infty s^{\eta - 1} A (s + A)^{-1} \phi ds.
\end{equation*}
Note that there is another definition, through singular integrals, of the fractional powers of the $(-\Delta)^{-\alpha}$ in $L^p(\R^N)$ frequently used in the literature. See \cite[Chapter 2.2]{M-S} for the proof of equivalence of the two definitions for $1<p<\frac{N}{2\Re \alpha}$; see also \cite[section 4.3]{D-K-S1}.  

{\bf Moment inequality.} We recall here, for completeness of presentation, the {\it moment inequality} estimate valid for fractional powers of non-negative operators having zero in the resolvent. Recalling \cite[p. 98]{Ya}, we have the following version of the moment inequality; for $0 \leq \alpha < \beta < \gamma \leq 1$:
\begin{equation*}
\| A^\beta \phi\| \leq \frac{(\sin \frac{\pi(\beta-\alpha)}{\gamma -\alpha}) (\gamma -\alpha)^2}{\pi(\gamma -\beta)(\beta -\alpha)} (M +1) \| A^\gamma \phi \|^{\frac{\beta -\alpha}{\gamma -\alpha}} \| A^\alpha \phi \|^{\frac{\gamma -\beta}{\gamma -\alpha}},
\end{equation*}
where $\phi \in D(A^\gamma)$. In particular, for $\frac{1}{2} < \beta \leq 1$, this reduces to
\begin{equation*}
\| A^\beta \phi\| \leq \frac{\sin(2\pi(\beta -\frac{1}{2}))}{4 \pi(1-\beta)(\beta -\frac{1}{2})} (M +1) \| A \phi \|^{2\beta - 1} \| A^{\frac{1}{2}} \phi \|^{2 -2\beta},
\end{equation*}
for $\phi \in D(A)$. Evidently we can let $\beta \searrow \frac{1}{2}$ in the formula above. 

\section{Critical problem $\alpha=\frac{1}{2}$ in bounded domain $\Omega \subset \R^2$.}\label{addedsect} 
Solvability for small initial data. It is known from \cite{C-C-W}, that the critical equation possess smooth global solution provided the $\| \theta_0 \|_{L^\infty}$ is sufficiently small. Within our technique similar result will be now proved for the Dirichlet problem. The same will be shown also in case of the {\it space periodic boundary conditions}.

\subsection{Quasi-geostrophic equation in bounded domain $\Omega$.} Very briefly we discuss now an extension of the earlier result obtained for the Cauchy problem in $\R^2$ to the case of the Dirichlet boundary value problem for the quasi-geostrophic equation in bounded domain $\Omega \subset \R^2$: 
\begin{equation}\label{Q-gequationOmega}
\begin{split}
&\theta_t + \nabla\cdot(u\theta) + \kappa(-\Delta)^\alpha \theta = f, \quad x \in \Omega \subset \R^2,\, t>0,  \\
&\theta = 0 \  \text{on}  \  \partial \Omega,    \\ 
&\theta(0,x) = \theta_0(x),
\end{split}
\end{equation}
Note, the nonlinear term above is now written in an equivalent form. Indeed, we  have used the property, which will be checked by a direct calculation on the expression $u = (-\frac{\partial\psi}{\partial x_2}, \frac{\partial\psi}{\partial x_1})$, that for regular solutions:      
\begin{equation}
u \cdot \nabla\theta = \nabla \cdot(u\theta). 
\end{equation}  
Such equivalent form will be more suitable for getting the required estimates.    

Working with {\it sectorial positive operator} $A_\alpha := (-\Delta)^\alpha, \alpha \in [\frac{1}{2},1]$, in $\Omega$ with zero boundary condition (e.g. \cite{HE, C-D}), we rewrite equation \eqref{Q-gequationOmega} in an abstract form:
\begin{equation}\label{6.1}
\begin{split}
&\theta^\alpha_t + \nabla\cdot(u\theta^\alpha) + A_\alpha \theta^\alpha = f, \ x \in \Omega, t>0,  \\
&\theta^\alpha(0,x) = \theta_0(x).  
\end{split}
\end{equation}  

Precisely as in the case of the Cauchy problem in $\R^2$ one can get the following local existence result. 
\begin{theorem}\label{lsloc}
Let $s>1$ be fixed. Then, for $f\in H^{s}(\Omega)$ and for arbitrary $\theta_0 \in H^{2\alpha^- +s}(\Omega)\cap D((-\Delta)^{[\frac{s}{2\alpha}]+1})$, there exists in the phase space $H^{2\alpha^- +s}(\Omega)$ a unique global in time mild solution $\theta(t)$ to the subcritical problem \eqref{6.1}, $\alpha \in (\frac{1}{2},1]$. Moreover,  
\begin{equation*}
\theta \in C((0,+\infty); H^{2\alpha+s}(\Omega)) \cap C([0,+\infty); H^{2\alpha^- +s}(\Omega)), \  \theta_t \in C((0,+\infty); H^{2\gamma+s}(\Omega)),
\end{equation*}
with arbitrary $\gamma < \alpha^-$.  Furthermore, the {\em Cauchy formula} is satisfied:
\begin{equation*}
\theta(t) = e^{-A_\alpha t} \theta_0 + \int_0^t e^{-A_\alpha(t-s)} F_1(\theta(s)) ds, \  t \in [0, +\infty),
\end{equation*}
where $e^{-A_{\alpha} t}$ denotes the linear semigroup corresponding to the operator $A_\alpha := (-\Delta)^\alpha$ in $H^s(\Omega)$ (with proper number of boundary conditions when $s$ is large), and
$$
F_1(\theta) = R_2 \theta \frac{\partial \theta}{\partial x_1} - R_1 \theta \frac{\partial \theta}{\partial x_2}.  
$$
\end{theorem}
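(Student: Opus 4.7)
The plan is to repeat, in the bounded-domain setting, the scheme used in Section \ref{section4} for the Cauchy problem in $\R^2$: first obtain a local mild solution by interpreting \eqref{6.1} as an abstract semilinear parabolic equation with a sectorial positive operator, and then extend globally by combining the Maximum Principle of Lemma \ref{MaxPrinc} with the higher-order a priori estimate of subsection \ref{FPE} carried out in $\Omega$. The base space will be $X=H^s(\Omega)$ with $s>1$, so that $H^s(\Omega)$ is a Banach algebra (since $2s>\dim\Omega=2$), and $A_\alpha=(-\Delta)^\alpha$ will be taken as the fractional power of the Dirichlet Laplacian on $\Omega$; this operator is sectorial positive and the fractional power scale $D(A_\alpha^\beta)$ is well defined. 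The regularity assumption $\theta_0\in H^{2\alpha^-+s}(\Omega)\cap D((-\Delta)^{[\frac{s}{2\alpha}]+1})$ serves precisely to place $\theta_0$ in the phase space $X^\beta$ with $2\alpha\beta=2\alpha^-+s$ while respecting the boundary compatibility conditions built into $D(A_\alpha^\beta)$.

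For \textbf{local existence} it suffices, by the Henry--Cholewa--Dlotko framework \cite{HE,C-D}, to verify that $F_1$ is Lipschitz on bounded sets from the phase space $H^{2\alpha^-+s}(\Omega)$ into $H^s(\Omega)$. I would proceed exactly as in \eqref{3.15}--\eqref{nondif}: decompose $F_1(\theta_1)-F_1(\theta_2)$ into four bilinear pieces, use that $H^s(\Omega)$ is a Banach algebra for $s>1$, and then estimate each factor using boundedness of the Riesz-type operators associated to $u=\nabla^\perp(-\Delta)^{-1/2}\theta$ together with the identity $\partial_k\theta=-R_k(-\Delta)^{1/2}\theta$. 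This gives
\[
\|F_1(\theta_1)-F_1(\theta_2)\|_{H^s(\Omega)}
\leq c'\bigl(\|\theta_1\|_{H^{2\alpha^-+s}(\Omega)},\|\theta_2\|_{H^{2\alpha^-+s}(\Omega)}\bigr)\|\theta_1-\theta_2\|_{H^{2\alpha^-+s}(\Omega)},
\]
from which the local mild solution, the Cauchy formula, and the regularity $\theta\in C((0,\tau);H^{2\alpha+s}(\Omega))\cap C([0,\tau);H^{2\alpha^-+s}(\Omega))$ with $\theta_t\in C((0,\tau);H^{2\gamma+s}(\Omega))$ for every $\gamma<\alpha^-$ are standard output of the abstract theory.

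For \textbf{global extendibility} one needs a bound for $c'$ along the orbit. The $L^q(\Omega)$ analogue of Lemma \ref{MaxPrinc} goes through unchanged, because the crucial cancellation of the nonlinear term under multiplication by $|\theta|^{q-1}\mathrm{sgn}(\theta)$ and integration by parts is still legal when $\theta|_{\partial\Omega}=0$, and the pointwise inequality $2\phi(-\Delta)^\alpha\phi\geq(-\Delta)^\alpha(\phi^2)$ remains available. One then reproduces the higher-order estimate \eqref{3.32} in $\Omega$ by multiplying \eqref{6.1} by $(-\Delta)^l\theta$, using the commutator estimate \eqref{commutator} and the Nirenberg--Gagliardo inequalities on $\Omega$; combined with the $L^q$ bounds from the maximum principle this yields a differential inequality that controls $\|\theta(t)\|_{H^l(\Omega)}$ by $\|f\|_{H^{l-\alpha}(\Omega)}$ and the data, and hence controls $\|\theta(t)\|_{H^{2\alpha^-+s}(\Omega)}$ globally. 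Uniqueness is automatic from the local Lipschitz condition.

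The \textbf{main obstacle} is the consistent use of the Riesz-type transforms and the commutator estimate on the bounded domain $\Omega$. Strictly speaking, the Riesz identities and the bounds \eqref{Rieszobs}, \eqref{DU} and \eqref{commutator} recalled in Section \ref{4.5} are stated on $\R^N$; to apply them on $\Omega$ one works with the spectral definition $u=\nabla^\perp(-\Delta)^{-1/2}\theta$ associated to the Dirichlet Laplacian, verifies that the resulting operator inherits the same mapping properties on the relevant $L^q(\Omega)$ and $H^s(\Omega)$ scales (as, e.g., in \cite{C-C-W}), and checks that the Dirichlet boundary condition survives the commutation with the fractional powers at each step. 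Once this technical point is settled, every estimate in the $\R^2$ proof transfers to $\Omega$ line by line and the conclusion of Theorem \ref{lsloc} in its $\Omega$-version follows.
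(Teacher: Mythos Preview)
Your proposal is correct and follows exactly the approach the paper takes: the paper gives no detailed proof of this theorem, merely stating that local existence is obtained ``precisely as in the case of the Cauchy problem in $\R^2$'' and that a version of the Maximum Principle \eqref{maxp11} yields global extendibility ``precisely as in the case of the Cauchy problem.'' Your write-up actually supplies more detail than the paper does, and you correctly flag the one genuine technical point the paper glosses over, namely that the Riesz-transform bounds \eqref{Rieszobs}, \eqref{DU} and the commutator estimate \eqref{commutator} are stated on $\R^N$ and must be reinterpreted spectrally for the Dirichlet Laplacian on $\Omega$.
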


Further, a version of the Maximum Principle \eqref{maxp11} will give us global in time extendibility of that local solution, precisely as in the case of the Cauchy problem. 

Our main result, for the Dirichlet problem case, is that for small $\|\theta_0\|_{L^\infty(\Omega)}$ the solutions to subcritical problems \eqref{Q-gequationOmega} converge to the solution of the critical problem as $\alpha\to \frac{1}{2}^+$.   

In this section we denote the solution of \eqref{6.1} as $\theta^\alpha$ (the superscript is added for clarity).  Such solutions, for any $\alpha \in (\frac{1}{2},1]$, are varying continuously in $H^{1+s}(\Omega)$, $s>1$, hence also in each of the spaces $W^{1,p}(\Omega)$, $2 \leq p \leq +\infty$. In particular they fulfill the {\it uniform in $\alpha \in (\frac{1}{2}, 1]$} estimates in $L^p(\Omega)$ of Lemma \ref{MaxPrinc}, the {\it main information} allowing us to let $\alpha \to \frac{1}{2}^+$ in the equation \eqref{Q-gequationOmega}. More precisely,  we have:
\begin{equation}\label{MP}
\exists_{const >0} \forall_{\alpha \in (\frac{1}{2}, \frac{3}{4}]} \forall_{p \in [2,+\infty)}  \  \|\theta^\alpha \|_{L^\infty([0,T]; L^p(\Omega))} \leq const,
\end{equation}
with $T>0$ fixed arbitrary large. 

Note also that we will let $q \to +\infty$ in the estimate \eqref{maxp11}. Indeed, for $\theta_0, f \in L^\infty(\Omega)$, taking q-th roots in \eqref{maxp11}, letting $q \to +\infty$ and notifying that $\lim_{q\to +\infty} e^{\frac{(q-1)t}{q}} = e^t$, we obtain: 
\begin{equation}\label{5.16}
\|\theta(t,\cdot)\|_{L^\infty(\Omega)} \leq \bigl(\| \theta_0 \|_{L^\infty(\Omega)} + \|f\|_{L^\infty(\Omega)}\bigr)e^t.  
\end{equation}
Consequently $p= +\infty$ is allowed in \eqref{MP}. The last estimate is valid also for the Cauchy problem.   

To prove the convergence of solutions as $\alpha\to \frac{1}{2}^+$, consider the difference of two equations \eqref{6.1} with different $\alpha \in (\frac{1}{2}, \frac{3}{4}]$, say $\frac{1}{2} < \alpha_2 < \alpha_1\leq \frac{3}{4}$. For simplicity we will use now the notation: $\theta_i; i=1,2$, are the solutions of \eqref{6.1} (constructed in Theorem \ref{lsloc} in $H^s(\Omega), s>1$) corresponding to proper $\alpha_i$. Multiplying the above difference in $L^2(\Omega)$ by $A_{\alpha_1}^{-1}(\theta_1 -\theta_2)$ we obtain:
\begin{equation}  
\begin{split}
\int_{\Omega} (\theta_1-\theta_2)_t A_{\alpha_1}^{-1}(\theta_1 -\theta_2) dx &= -\int_{\Omega} \left(A_{\alpha_1}(\theta_1 - \theta_2) + (A_{\alpha_1} - A_{\alpha_2}) \theta_2\right) A_{\alpha_1}^{-1}(\theta_1 -\theta_2) dx   \\  
&-\int_{\Omega} \bigl(u_1 \cdot \nabla\theta_1 - u_2 \cdot \nabla\theta_2\bigr) A_{\alpha_1}^{-1}(\theta_1 -\theta_2) dx. 
\end{split}
\end{equation}  
Working in a Hilbert space $L^2(\Omega)$ is a bit simpler than in general $L^p(\Omega)$; we need not use the 'conjugate' operators like in general case, and our task is to get {\it a weak form of convergence} of the two solutions $\theta_1$ to $\theta_2$ in $H^{-\frac{1}{2}}(\Omega)$. With the use of our previous regularity knowledge of the solutions such knowledge will be extended to better norms convergence using the interpolation argument.  

We will transform the components one by one as follows.  
\begin{equation}\label{5.4a}
\int_{\Omega} (\theta_1-\theta_2)_t A_{\alpha_1}^{-1}(\theta_1 -\theta_2) dx = \frac{1}{2} \frac{d}{dt} \int_{\Omega} (A_{\alpha_1}^{-\frac{1}{2}}(\theta_1 -\theta_2))^2 dx, 
\end{equation}
\begin{equation}\label{5.5a}
-\int_{\Omega} A_{\alpha_1}(\theta_1 - \theta_2) A_{\alpha_1}^{-1}(\theta_1 -\theta_2) dx = -\|\theta_1 - \theta_2\|^2_{L^2(\Omega)},
\end{equation}
\begin{equation}\label{5.6a}
\begin{split}
-\int_{\Omega} (A_{\alpha_1} - A_{\alpha_2}) \theta_2 A_{\alpha_1}^{-1}(\theta_1 -\theta_2) dx &= -\int_{\Omega} A_{\alpha_1}^{-1}(A_{\alpha_1} - A_{\alpha_2})\theta_2 (\theta_1 -\theta_2) dx  \\
&\leq \|A_{\alpha_1}^{-1}(A_{\alpha_1} - A_{\alpha_2})\theta_2\|_{L^2(\Omega)} \|\theta_1 -\theta_2\|_{L^2(\Omega)},   
\end{split}
\end{equation} 
\begin{equation}\label{nonlinear}
\begin{split}
|\int_{\Omega} \bigl(u_1 \cdot \nabla\theta_1 - u_2 \cdot \nabla\theta_2\bigr) A_{\alpha_1}^{-1}(\theta_1 -\theta_2) dx| &= |\int_{\Omega} A_{\frac{1}{2}}^{-1}\bigl(u_1 \cdot \nabla\theta_1 - u_2 \cdot \nabla\theta_2\bigr) A_{\alpha_1-\frac{1}{2}}^{-1}(\theta_1 - \theta_2) dx|   \\
&\leq \|A_{\frac{1}{2}}^{-1}\bigl(u_1 \cdot \nabla\theta_1 - u_2 \cdot \nabla\theta_2\bigr)\|_{L^2(\Omega)} 
\|A_{\alpha_1-\frac{1}{2}}^{-1}(\theta_1 - \theta_2)\|_{L^2(\Omega)},  
\end{split} 
\end{equation} 
we leave estimating the nonlinearity for a moment.

To deal with intermediate term, we will use the following property.  
\begin{lemma}\label{new} 
Let $A$ be a {\em positive operator} in a Banach space $X$ (\cite{TR, M-S, C-D}). For arbitrary  $\phi \in X$, we have  
\begin{equation*}
\forall_{\epsilon >0} \exists_{L}  \|(I - A^{-\beta})\phi \|_{X} \leq \sin(\pi\beta) \bigl(\frac{2L(1+M)}{\pi} + L^{-1}M)\| \phi\|_X + \epsilon.    
\end{equation*}
Consequently, the left hand side tends to zero as $0 < \beta \to 0^+$.    
\end{lemma}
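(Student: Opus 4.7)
The plan is to adapt the strategy of Lemma \ref{negativeconv}, now sending the exponent to $0$ rather than $\frac{1}{2}$ and allowing arbitrary $\phi \in X$ instead of only $\phi \in D(A^{-1})$. The starting point is the Balakrishnan representation
\begin{equation*}
A^{-\beta}\phi = \frac{\sin(\pi\beta)}{\pi}\int_0^\infty \lambda^{-\beta}(A+\lambda I)^{-1}\phi\,d\lambda, \quad \beta \in (0,1),
\end{equation*}
combined with the classical beta-function identity $\int_0^\infty \lambda^{-\beta}(\lambda+1)^{-1}\,d\lambda = \pi/\sin(\pi\beta)$, which applied to the scalar constant $1$ rewrites $\phi$ itself as the same type of integral. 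Subtracting yields
\begin{equation*}
(I - A^{-\beta})\phi = \frac{\sin(\pi\beta)}{\pi}\int_0^\infty \lambda^{-\beta}\Bigl[\frac{\phi}{\lambda+1} - (A+\lambda I)^{-1}\phi\Bigr]\,d\lambda,
\end{equation*}
reducing the problem to a single resolvent integral.

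Next I would split this integral at $\lambda = L$. On the tail $(L,\infty)$ the basic resolvent bound $\|(A+\lambda I)^{-1}\| \leq M/\lambda$ together with $1/(\lambda+1) \leq 1/\lambda$ bound the integrand by $(1+M)\|\phi\|_X/\lambda$; integrating against $\lambda^{-\beta}$ and using the elementary inequality $\sin(\pi\beta)/(\pi\beta) \leq 1$ (together with $L^{-\beta} \leq 1$ for $L \geq 1$) gives the advertised contribution of order $\sin(\pi\beta)\,L^{-1} M\,\|\phi\|_X$. On the bulk $(0,L)$ the bound $M/\lambda$ is not integrable at the origin against $\lambda^{-\beta}$, so one must use instead the second estimate in \eqref{dwa}, namely $\|(A+\lambda I)^{-1}\phi\|_X \leq (M+1)\|A^{-1}\phi\|_X$, valid on $D(A^{-1})$. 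Combined with $\|\phi/(\lambda+1)\|_X \leq \|\phi\|_X$ this contributes at most $\sin(\pi\beta)\cdot 2L(1+M)/\pi \cdot \|\phi\|_X$ after integrating $\lambda^{-\beta}$ over $(0,L)$ (using $1/(1-\beta) \leq 2$ for $\beta \leq 1/2$).

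For arbitrary $\phi \in X$, not necessarily in $D(A^{-1})$, I would exploit the density of the range $R(A) = D(A^{-1})$ in $X$, which holds for a sectorial positive operator in a reflexive Banach space and is recalled at the start of the Appendix. Pick $\phi_0 \in D(A^{-1})$ with $\bigl(1 + \sup_{\beta \in (0, 1/2]}\|A^{-\beta}\|_{\mathcal{L}(X)}\bigr)\,\|\phi - \phi_0\|_X \leq \epsilon$; the uniform boundedness of $\|A^{-\beta}\|$ for small $\beta$ is obtained from the same Balakrishnan representation by splitting the integral at $\lambda = 1$ and bounding each piece. Applying the two-piece split estimate to $\phi_0$, and absorbing $\|(I - A^{-\beta})(\phi - \phi_0)\|_X$ into the $+\epsilon$ slack, yields the claimed inequality. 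The consequent pointwise convergence as $\beta \to 0^+$ is then immediate: once $\epsilon$ and the corresponding $L$ are fixed, $\sin(\pi\beta)$ times any $L$-dependent constant tends to zero, so $\limsup_{\beta \to 0^+}\|(I - A^{-\beta})\phi\|_X \leq \epsilon$, and $\epsilon$ was arbitrary.

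The principal obstacle is controlling the integrand near $\lambda = 0$: the raw resolvent bound $M/\lambda$ is non-integrable against $\lambda^{-\beta}$ for general $\phi$, so the approximation via density of $R(A)$ is essential and produces exactly the $+\epsilon$ slack in the statement. A secondary technicality is matching the precise numerical constants in $\sin(\pi\beta)\bigl(\frac{2L(1+M)}{\pi} + L^{-1}M\bigr)\|\phi\|_X$; this amounts to careful bookkeeping of the places where $\sin(\pi\beta)/(\pi\beta) \leq 1$, $L^{-\beta} \leq 1$, and $1/(1-\beta) \leq 2$ are applied.
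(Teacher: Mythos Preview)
There is a genuine gap in your tail estimate on $(L,\infty)$. The crude triangle-inequality bound
\[
\Bigl\|\frac{\phi}{\lambda+1}-(A+\lambda I)^{-1}\phi\Bigr\|_X \leq \frac{(1+M)\|\phi\|_X}{\lambda}
\]
does \emph{not} yield the claimed contribution $\sin(\pi\beta)\,L^{-1}M\,\|\phi\|_X$. Integrating $\lambda^{-1-\beta}$ over $(L,\infty)$ and multiplying by $\sin(\pi\beta)/\pi$ gives
\[
\frac{\sin(\pi\beta)}{\pi\beta}\,L^{-\beta}(1+M)\|\phi\|_X,
\]
and since $\sin(\pi\beta)/(\pi\beta)\to 1$ and $L^{-\beta}\to 1$ as $\beta\to 0^+$, this tends to $(1+M)\|\phi\|_X$, not to zero. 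The factor $1/\beta$ from the integral exactly cancels the prefactor $\sin(\pi\beta)$, so no smallness survives. Your density argument does not rescue this: it only addresses the $(0,L)$ piece, while the $(L,\infty)$ piece remains of order one for every $\phi$.

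The paper proceeds differently on both pieces. On $(L,\infty)$ it exploits the \emph{cancellation} in the bracket via
\[
(\lambda+A)^{-1}\phi-\frac{\phi}{\lambda+1}=\frac{1}{\lambda+1}\bigl[\lambda(\lambda+A)^{-1}\phi-\phi+(\lambda+A)^{-1}\phi\bigr],
\]
together with the asymptotic $\lambda(\lambda+A)^{-1}\phi\to\phi$ as $\lambda\to\infty$, valid for \emph{every} $\phi\in X$. Choosing $L$ so large that $\|\lambda(\lambda+A)^{-1}\phi-\phi\|_X\leq\epsilon$ for $\lambda\geq L$ produces the $+\epsilon$ term of the statement, while the remaining $(\lambda+A)^{-1}\phi$ contributes genuine $1/\lambda^2$ decay and gives the $\sin(\pi\beta)L^{-1}M\|\phi\|_X$ term. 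On $(0,L)$ no recourse to $D(A^{-1})$ is needed: since $A$ is \emph{positive} (not merely non-negative), $\|(\lambda+A)^{-1}\|\leq M$ uniformly for $\lambda\geq 0$, so the integrand is bounded by $(1+M)\|\phi\|_X$ near $\lambda=0$ and directly integrable against $\lambda^{-\beta}$, giving the $\sin(\pi\beta)\cdot 2L(1+M)/\pi\cdot\|\phi\|_X$ term for any $\phi\in X$.
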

We will use that lemma to estimate the difference $(I - A_{\alpha_1}^{-1}A_{\alpha_2})$ when $\frac{1}{2} < \alpha_2 \leq \alpha_1 \to \frac{1}{2}^+$.         
\begin{proof}
Our task is, for fixed $\phi \in X$ and $\beta$ near $0^+$, to estimate the expression:  
\begin{equation}\label{integr}
\begin{split}
(A^{-\beta}-I)\phi &= \frac{\sin(\pi\beta)}{\pi} \int_0^\infty \lambda^{-\beta} (\lambda+A)^{-1} \phi d\lambda - \frac{\sin(\pi(1-\beta))}{\pi} \int_0^\infty \frac{\lambda^{(1-\beta)-1}}{\lambda+1} d\lambda \  \phi  \\   
&= \frac{\sin(\pi\beta)}{\pi} \int_0^\infty \lambda^{-\beta} [(\lambda+A)^{-1} \phi - \frac{1}{\lambda+1} \phi] d\lambda.
\end{split}
\end{equation}
In the estimates we are using the following properties; taken from \cite[p. 62]{M-S} equality valid for $\eta \in (0,1)$   
\begin{equation*}
\int_0^\infty \frac{\lambda^{\eta-1}}{\lambda+1} d\lambda = \frac{\pi}{\sin(\pi\eta)}, 
\end{equation*}
the simple formula:    
\begin{equation}\label{raz}
(\lambda+A)^{-1} \phi - \frac{1}{\lambda+1} \phi = \frac{1}{\lambda+1} \bigl[\lambda(\lambda+A)^{-1} \phi-\phi +(\lambda+A)^{-1} \phi\bigr],      
\end{equation}
and the two asymptotic properties of {\it non-negative operators} valid on functions $\phi \in X$ taken from \cite[Proposition 1.1.3]{M-S}:
\begin{equation}\label{dwaa}
\begin{split}
&\lim_{\lambda \to \infty} \lambda(\lambda+A)^{-1} \phi = \phi,     \\
&\lim_{\lambda\to \infty} (\lambda+A)^{-1} A\phi = 0.
\end{split}  
\end{equation}  

Returning to the proof, we split the integral in \eqref{integr} into $(0,L)$ and $(L,\infty)$ and estimate the first part,  
\begin{equation}
\begin{split}
\frac{\sin(\pi\beta)}{\pi}\| \int_0^L \lambda^{-\beta} (\frac{1}{\lambda+1} - (\lambda+A)^{-1})\phi d\lambda\|_X &\leq \frac{\sin(\pi\beta)}{\pi} \int_0^L \lambda^{-\beta}(1 + M) d\lambda \| \phi\|_X    \\
&=\frac{\sin(\pi\beta)}{\pi} \frac{L^{1-\beta}}{1-\beta} (1+M) \|\phi\|_X,
\end{split}
\end{equation}
where $L>0$ will be chosen later. Note that letting $\beta \to 0^+$ the result of the estimate above is bounded by $|\frac{\sin(\pi\beta)}{\pi}| 2L(1+M)\|\phi\|_X \to 0$, for any fixed $L>0$.

Next using \eqref{raz}, the integral over $(L, \infty)$ is, for $\phi \in X$, estimated as follows: 
\begin{equation}
\frac{\sin(\pi\beta)}{\pi} \|\int_L^\infty \lambda^{-\beta} [(\lambda+A)^{-1} \phi - \frac{1}{\lambda+1} \phi] d\lambda \|_X \leq \frac{\sin(\pi\beta)}{\pi} \int_L^\infty \frac{\lambda^{-\beta}}{\lambda+1} \| \lambda(\lambda+A)^{-1} \phi-\phi +(\lambda+A)^{-1} \phi \|_X d\lambda, 
\end{equation}
where due to \eqref{dwaa} we see that       
\begin{equation}\label{5.14}  
\|(\lambda+1)(\lambda+A)^{-1} A\phi - \phi\|_X \leq  \| \lambda(\lambda+A)^{-1} \phi-\phi\|_X + \|(\lambda+A)^{-1} \phi \|_X \leq \epsilon +\frac{M}{1+\lambda} \|\phi\|_X  \  \text{as}  \  \lambda \to \infty,
\end{equation}
$\epsilon>0$ arbitrary fixed.  Consequently we obtain
\begin{equation}
\begin{split}
\frac{\sin(\pi\beta)}{\pi}  \int_L^\infty \frac{\lambda^{-\beta}}{\lambda +1} \bigl(\epsilon +\frac{M}{1+\lambda} \|\phi\|_X) d\lambda &\leq \frac{\sin(\pi\beta)}{\pi}  \int_L^\infty \lambda^{-\beta-1} \bigl(\epsilon +\frac{M}{\lambda} \|\phi\|_X) d\lambda    \\
&\leq \frac{\sin(\pi\beta)}{\pi} \frac{L^{-\beta}}{\beta}\epsilon + \frac{\sin(\pi\beta)}{\pi} \frac{L^{-\beta-1}}{\beta +1} M \|\phi\|_X,   
\end{split}
\end{equation}    
for sufficiently large value of $L \geq 1$, as specified in \eqref{5.14}. Note that letting $\beta \to 0^+$ in the resulting estimate we have:
\begin{equation}
\frac{\sin(\pi\beta)}{\pi\beta} L^{-\beta} \epsilon + \frac{\sin(\pi\beta)}{\pi} \frac{L^{-\beta-1}}{\beta +1} M \|\phi\|_X \leq \epsilon + \sin(\pi\beta) L^{-1} M \|\phi\|_X,
\end{equation}
for chosen large value of $L$.   

For such $L$ we get a final estimate of the integral in \eqref{integr} having the form:     
\begin{equation}\label{5.17}
\| (A^{-\beta}-I)\phi\|_X \leq \frac{\sin(\pi\beta)}{\pi} \|\int_0^\infty \lambda^{-\beta} [(\lambda+A)^{-1} \phi - \frac{1}{\lambda+1} \phi] d\lambda \|_X \leq \sin(\pi\beta) \bigl(\frac{2L(1+M)}{ \pi} + L^{-1} M \bigr)\|\phi\|_X + \epsilon,    
\end{equation}
where $\epsilon >0$ was arbitrary. The right hand side of \eqref{5.17} will be made small when we let $\beta$ near $0^+$, noting $\epsilon$ was an arbitrary positive number.   
\end{proof}
 
Now we intent to estimate the nonlinearity in terms of the difference $(\alpha_1-\frac{1}{2})$. As a consequence of the property: $u\cdot \nabla\theta =\nabla \cdot(u\theta)$, in the case of the difference, we obtain
\begin{equation}\label{5.19a}
u_1 \cdot \nabla\theta_1 - u_2 \cdot\nabla\theta_2 = \nabla \cdot(u_1\theta_1 - u_2\theta_2),  
\end{equation}
so that    
\begin{equation}\label{puc}
\begin{split}
 &\|A_{\frac{1}{2}}^{-1}(u_1 \cdot \nabla\theta_1 - u_2 \cdot\nabla\theta_2)\|_{L^2(\Omega)} =  \|(-\Delta)^{-\frac{1}{2}} \nabla \cdot(u_1\theta_1 - u_2\theta_2)\|_{L^2(\Omega)}   \\
&= \|(-\Delta)^{-\frac{1}{2}} \nabla \cdot(u_1\theta_1 -u_1\theta_2 + u_1\theta_2 - u_2\theta_2)\|_{L^2(\Omega)} \leq c \bigl(\|\theta_1\|_{L^\infty(\Omega)} + \|\theta_2\|_{L^\infty(\Omega)}\bigr) \|\theta_1 - \theta_2\|_{L^2(\Omega)},   
\end{split}
\end{equation}   
where we have used \eqref{DU} with $j=0$ and boundedness of the operator $(-\Delta)^{-\frac{1}{2}} \nabla$ in $L^2(\Omega)$.

Note further that, from the formula of negative powers  
\begin{equation}
A^{-\beta} \phi = \frac{\sin(\pi \beta)}{\pi} \int_0^\infty \lambda^{-\beta} (A+\lambda I)^{-1} \phi d\lambda,
\end{equation}
by splitting the integral over $(0,L)$ and $(L,\infty)$, we get an estimate 
\[
\| A^{-\beta} \phi \|_X \leq \frac{\sin(\pi\beta)}{\pi} M \| \phi \|_X \bigl(\frac{L^{1-\beta}}{1- \beta} + \frac{L^{-\beta}}{\beta}\bigr),
\]
and then by taking $L=1$ we have, as $\beta\in (0,\frac{1}{2})$,  
\[
\| A^{-\beta} \phi \|_X \leq \frac{\sin(\pi\beta)}{\pi} M \| \phi \|_X \bigl(\frac{1}{1- \beta} + \frac{1}{\beta}\bigr)\leqslant \frac{2}{\pi}M \| \phi \|_X+ \frac{\sin(\pi\beta)}{\pi\beta} M \| \phi \|_X \leqslant (\frac{2}{\pi}+2)M \| \phi \|_X.
\]

Thus, thanks to the last estimate used for $\beta= \alpha_1-\frac{1}{2}$,     
\begin{equation}\label{5.24a}
\begin{split}
&|\int_{\Omega} \bigl(u_1 \cdot \nabla\theta_1 - u_2 \cdot \nabla\theta_2\bigr) A_{\alpha_1}^{-1}(\theta_1 -\theta_2) dx|\\
 &\leqslant c \bigl(\|\theta_1\|_{L^\infty(\Omega)} + \|\theta_2\|_{L^\infty(\Omega)}\bigr) \|\theta_1 - \theta_2\|_{L^2(\Omega)} \|(-\Delta)^{\alpha_1-\frac{1}{2}}(\theta_1 - \theta_2)\|_{L^2(\Omega)}    \\
 &\leqslant (\frac{2}{\pi}+2)M c \bigl(\|\theta_1\|_{L^\infty(\Omega)} + \|\theta_2\|_{L^\infty(\Omega)}\bigr) \|\theta_1 - \theta_2\|^2_{L^2(\Omega)}.  
\end{split}
\end{equation}

Putting estimates \eqref{5.4a}-\eqref{5.6a} and \eqref{5.24a} together, using Cauchy's and Young's inequalities, we obtain a differential inequality of the form:
\begin{equation}\label{5.15}
\begin{split}
\frac{d}{dt} &\int_{\Omega} (A_{\alpha_1}^{-\frac{1}{2}}(\theta_1 -\theta_2))^2 dx \leq  -\|\theta_1 - \theta_2\|^2_{L^2(\Omega)} + O(\alpha_1-\alpha_2)\|\theta_2\|_{L^2(\Omega)}\|\theta_1 - \theta_2\|_{L^2(\Omega)}  \\
&\quad + (\frac{2}{\pi}+2)M c \bigl(\|\theta_1\|_{L^\infty(\Omega)} + \|\theta_2\|_{L^\infty(\Omega)}\bigr) \|\theta_1 - \theta_2\|^2_{L^2(\Omega)}   \\
&\leqslant \left(-1 + (\frac{2}{\pi}+2)M c \bigl(\|\theta_1\|_{L^\infty(\Omega)} + \|\theta_2\|_{L^\infty(\Omega)}\bigr)\right) \|\theta_1 - \theta_2\|^2_{L^2(\Omega)} +c_3O(\alpha_1-\alpha_2)   \\
&\leqslant c_1 \left(-1 + (\frac{2}{\pi}+2)M c \bigl(\|\theta_1\|_{L^\infty(\Omega)} + \|\theta_2\|_{L^\infty(\Omega)}\bigr)\right) \|A_{\alpha_1}^{-\frac{1}{2}}(\theta_1 - \theta_2)\|^2_{L^2(\Omega)} + c_3O(\alpha_1-\alpha_2),
\end{split} 
\end{equation} 
where $O(\alpha_1-\alpha_2)$ comes from \eqref{5.6a} ($O(\cdot)$ means that $O(\alpha)\to 0$ as $\alpha\to 0$) and $c_1$ is an embedding constant for $L^2(\Omega) \subset H^{-\alpha_1}(\Omega)$. 

We consider the last estimate under the 'smallness hypothesis': {\it Let $\|\theta_0\|_{L^\infty(\Omega)}$ be so small, that the collected coefficient above 
\begin{equation}\label{pac}
-c_2:= c_1\left(-1 + (\frac{2}{\pi}+2)M c \bigl(\|\theta_1\|_{L^\infty(\Omega)} + \|\theta_2\|_{L^\infty(\Omega)}\bigr)\right) < 0,
\end{equation}  
is negative on $[0,T]$}.  Since $\int_{\Omega} (A_{\alpha_1}^{-\frac{1}{2}}(\theta_1(0) -\theta_2(0)))^2 dx=0$, applying the differential inequality to $y(t) := \int_{\Omega} (A_{\alpha_1}^{-\frac{1}{2}}(\theta_1(t) -\theta_2(t)))^2 dx$, we get an estimate   
\begin{equation}
y(t) \leq  \frac{c_3}{c_2} O(\alpha_1-\alpha_2) \  \text{as}  \   \alpha_1-\alpha_2 \to 0,
\end{equation}
when $\alpha_1 \searrow \frac{1}{2}$.  

Consequently, under the smallness hypotheses \eqref{pac}, we have the convergence
\begin{equation}
\|\theta_1 -\theta_2\|_{H^{-\frac{1}{2}}(\Omega)} \to 0.
\end{equation}     
Using the {\it interpolation argument} and Maximum Principle, the last convergence will be strengthened to
\begin{equation}\label{7.28}
\|\theta_1 -\theta_2\|_{H^{-\epsilon}(\Omega)} \leq c \|\theta_1 -\theta_2\|^{2\epsilon}_{H^{-\frac{1}{2}}(\Omega)} \|\theta_1 -\theta_2\|^{1- 2\epsilon}_{L^2(\Omega)} 
\leq c \|\theta_1 -\theta_2\|^{2\epsilon}_{H^{-\frac{1}{2}}(\Omega)} \bigl(\|\theta_1\|^{1- 2\epsilon}_{L^2(\Omega)} + \| \theta_2\|^{1- 2\epsilon}_{L^2(\Omega)}\bigr)  \to 0,  
\end{equation}
where $\epsilon>0$ is an arbitrary small number. Therefore the convergence holds in each {\it negative order} space, but not in $H^0(\Omega)=L^2(\Omega)$.   
 
We note also, that since $H^{\frac{1}{2}}(\Omega) \subset L^4(\Omega), N=2$, then by duality $L^{\frac{4}{3}^+}(\Omega) \subset L^{\frac{4}{3}}(\Omega) \subset H^{-\frac{1}{2}}(\Omega)$, and the following convergence as $\alpha_1 \searrow \frac{1}{2}$ holds:  
\begin{equation}\label{5.29}    
\|\theta_1 -\theta_2\|_{L^{\frac{4}{3}^+}(\Omega)} \leq c \|\theta_1 -\theta_2\|_{H^{-\frac{1}{2}}(\Omega)}^\mu \|\theta_1 -\theta_2\|_{L^2(\Omega)}^{1 -\mu} \to 0, 
\end{equation}
where $\mu \in (0,1)$.   

\begin{remark}  
Consider now the case when the smallness hypotheses \eqref{pac} is violated.  Since we have Maximum Principle valid also in $L^4(\Omega)$, we will divide the first right hand side term in \eqref{5.15} into two parts and estimate the result using Nirenberg-Gagliardo inequality, as follows:
\begin{equation}
\begin{split}
\frac{d}{dt} &\int_{\Omega} (A_{\alpha_1}^{-\frac{1}{2}}(\theta_1 -\theta_2))^2 dx \leq  -\|\theta_1 - \theta_2\|^2_{L^2(\Omega)} + O(\alpha_1-\alpha_2)\|\theta_2\|_{L^2(\Omega)}\|\theta_1 - \theta_2\|_{L^2(\Omega)}  \\
\quad &+(\frac{2}{\pi}+2)M c \bigl(\|\theta_1\|_{L^\infty(\Omega)} + \|\theta_2\|_{L^\infty(\Omega)}\bigr) \|\theta_1 - \theta_2\|^2_{L^2(\Omega)}   \\
\leqslant &-\frac{1}{2}\|\theta_1 - \theta_2\|^2_{L^2(\Omega)} + (\frac{2}{\pi}+2)M c \bigl(\|\theta_1\|_{L^\infty(\Omega)} + \|\theta_2\|_{L^\infty(\Omega)}\bigr) \|\theta_1 - \theta_2\|^2_{L^2(\Omega)} + c_3O(\alpha_1-\alpha_2)   \\
\leqslant &-\frac{1}{2}\|\theta_1 - \theta_2\|^2_{L^2(\Omega)} +  (\frac{2}{\pi}+2)M c \bigl(\|\theta_1\|_{L^\infty(\Omega)} + \|\theta_2\|_{L^\infty(\Omega)}\bigr) \|\theta_1 - \theta_2\|_{L^4(\Omega)} \|A_{\alpha_1}^{-\frac{1}{2}}(\theta_1 -\theta_2)\|_{L^2(\Omega)}   \\
&+ c_3O(\alpha_1-\alpha_2). 
\end{split} 
\end{equation} 
We obtain, for  $y(t) := \int_{\Omega} (A_{\alpha_1}^{-\frac{1}{2}}(\theta_1(t) -\theta_2(t)))^2 dx$, a differential inequality of the form:  
\begin{equation}
\begin{split}  
&y'(t) \leq -\frac{1}{2} y(t) + C_2 \sqrt{y(t)} + c_3O(\alpha_1-\alpha_2),  \\
&y(0) = 0,
\end{split}
\end{equation}
with positive constants $C_2, c_3$. Unfortunately, its solution need not tend to zero when $\alpha_1 \searrow \frac{1}{2}$. It is only bounded by the argument $y_{max}$, the  positive zero of the right hand side function $f(y):= -\frac{1}{2}y +C_2 \sqrt{y} +c_3O(\alpha_1-\alpha_2)$, $y \geq 0$. This gives the bound $y(t) \leq y_{max}$, valid for all positive times.  
\end{remark}

\subsection{Passing to the limit in  equation \eqref{6.1}.} Having proved the convergence $\theta^\alpha\to \theta$ in $H^{-\frac{1}{2}}(\Omega)$ together  with its consequences \eqref{7.28}, \eqref{5.29}, we consider equation \eqref{6.1}:
\begin{equation*}  
\begin{split}
&\theta^\alpha_t + \nabla \cdot(u^\alpha\theta^\alpha) + A_\alpha \theta^\alpha = f, \ x \in \Omega, t>0,  \\
&\theta^\alpha(0,x) = \theta_0(x),
\end{split}
\end{equation*}
We apply to \eqref{6.1} the operator $A_\alpha^{-1}= (-\Delta)^{-\alpha}$  and will {\it reduce the two operators} as in the proof of convergence above. Multiplying the result by a 'test function' $\eta \in H^1(\Omega)$ (note, $H^1(\Omega) \subset L^4(\Omega), N=2$), we obtain  
\begin{equation}
\int_{\Omega} A_\alpha^{-1}\theta^\alpha_t \eta dx + \int_\Omega A_\alpha^{-1} \nabla \cdot(u^\alpha\theta^\alpha) \eta dx + \int_\Omega \theta^\alpha \eta dx = \int_\Omega A_\alpha^{-1}f \eta dx,
\end{equation}
or equivalently,   
\begin{equation}
\begin{split}
\int_{\Omega} (-\Delta)^{-\frac{\alpha}{2}}\theta^\alpha_t (-\Delta)^{-\frac{\alpha}{2}}\eta dx + \int_{\Omega} (-\Delta)^{-(\alpha-\frac{1}{2})} (-\Delta)^{-\frac{1}{2}} \nabla \cdot (u^\alpha\theta^\alpha) \eta dx &+ \int_\Omega \theta^\alpha \eta dx   \\
&= \int_\Omega (-\Delta)^{-\frac{\alpha}{2}}f (-\Delta)^{-\frac{\alpha}{2}}\eta dx.  
\end{split}  
\end{equation}
Now, for the nonlinear term, 
\begin{equation}  
\int_{\Omega} (-\Delta)^{-(\alpha -\frac{1}{2})} (-\Delta)^{-\frac{1}{2}} \nabla \cdot (u^\alpha\theta^\alpha) \eta dx = \int_{\Omega} (-\Delta)^{-\frac{1}{2}} \nabla \cdot (u^\alpha\theta^\alpha) (-\Delta)^{-(\alpha -\frac{1}{2})}\eta dx. 
\end{equation} 
Noting boundedness of the operator $(-\Delta)^{-\frac{1}{2}}\nabla$ in $L^{\frac{4}{3}^+}(\Omega)$ and \eqref{5.29}, we justify that 
\begin{equation}
\begin{split}
\|u^\alpha \theta^\alpha - u\theta\|_{L^{\frac{4}{3}^+}(\Omega)} &\leq \|u^\alpha (\theta^\alpha - \theta) \|_{L^{\frac{4}{3}^+}(\Omega)} + \|(u^\alpha - u) \theta\|_{L^{\frac{4}{3}^+}(\Omega)}   \\
&\leq \|u^\alpha \|_{L^\infty(\Omega)} \|\theta^\alpha - \theta\|_{L^{\frac{4}{3}^+}(\Omega)} + \|u^\alpha - u\|_{L^{\frac{4}{3}^+}(\Omega)} \|\theta\|_{L^\infty(\Omega)} \to 0,
\end{split}
\end{equation}
and, thanks to Lemma \ref{new}, the convergence   
\begin{equation}
\begin{split}
\int_{\Omega} (-\Delta)^{-(\alpha -\frac{1}{2})} &(-\Delta)^{-\frac{1}{2}} \nabla \cdot (u^\alpha\theta^\alpha) \eta dx  = \int_{\Omega} (-\Delta)^{-\frac{1}{2}} \nabla \cdot (u^\alpha\theta^\alpha) (-\Delta)^{-(\alpha -\frac{1}{2})} \eta dx    \\
&\to \int_{\Omega} (-\Delta)^{-\frac{1}{2}} \nabla \cdot (u\theta) \eta dx = <(-\Delta)^{-\frac{1}{2}} \nabla \cdot(u\theta), \eta>_{L^{\frac{4}{3}^+}(\Omega),L^{4^-}(\Omega)}, 
\end{split} 
\end{equation}
as $\alpha \searrow \frac{1}{2}$. We will also write:
\begin{equation}
\int_\Omega \theta^\alpha \eta dx = \int_\Omega (-\Delta)^{-\frac{1}{2}}\theta^\alpha (-\Delta)^{\frac{1}{2}}\eta dx \to \int_\Omega (-\Delta)^{-\frac{1}{2}}\theta (-\Delta)^{\frac{1}{2}}\eta dx,
\end{equation}
to allow letting $\alpha \searrow \frac{1}{2}$ in that term. In the term containing time derivative $\theta^\alpha_t$  we can pass to the limit in the sense of 'scalar distributions' (see \cite{Li}). Indeed, since the approximating solutions $\theta^\alpha$ satisfy, in particular, 
\begin{equation}
\theta^\alpha \in L^\infty(0,T;L^2(\Omega)), \theta_t^\alpha \in L^2(0,T;L^2(\Omega)), \    \alpha \in (\frac{1}{2},\frac{3}{4}], 
\end{equation}
then by \cite[Lemma 1.1, Chapt.III]{TE1} 
\begin{equation}
\forall_{\eta \in L^2(\Omega)} <\theta_t^\alpha, \eta> = \frac{d}{dt}<\theta^\alpha,\eta> \to \frac{d}{dt}<\theta,\eta>,
\end{equation}
the derivative $\frac{d}{dt}$ and the convergence are in $\mathcal{D}'(0,T)$ (space of the 'scalar distributions'). Consequently, 
\begin{equation}
\int_{\Omega} A_\alpha^{-1}\theta^\alpha_t \eta dx = \int_{\Omega} \theta^\alpha_t A_\alpha^{-1}\eta dx = \frac{d}{dt}<\theta^\alpha,A_\alpha^{-1}\eta> \to \frac{d}{dt}<\theta,A_{\frac{1}{2}}^{-1}\eta>
\end{equation}
in $\mathcal{D}'(0,T)$ as $\alpha \searrow \frac{1}{2}$. 

Collecting all the limits together, we find the form of limit {\it critical equation}:    
\begin{equation}
\frac{d}{dt}<\theta,(-\Delta)^{-\frac{1}{2}}\eta> + <(-\Delta)^{-\frac{1}{2}} \nabla \cdot(u\theta), \eta>_{L^{\frac{4}{3}^+}(\Omega),L^{4^-}(\Omega)} + \int_\Omega (-\Delta)^{-\frac{1}{2}}\theta (-\Delta)^{\frac{1}{2}}\eta dx = \int_\Omega f (-\Delta)^{-\frac{1}{2}}\eta dx.
\end{equation}

\bigskip 
{\bf Acknowledgement.} T.D. and M.B.K. are supported by NCN grant DEC-2012/05/B/ST1/00546 (Poland). C.S. was supported by the NSFC Grants 11031003, 11171028,
lzujbky-2012-10 and NCET-11-0214.


\begin{thebibliography}{11}
\bibitem{AD} R.A. Adams, \textit{Sobolev Spaces}, Academic Press, New York, 1975.

\bibitem{B-K-W} P. Biler, G. Karch, W. Woyczynski, \textit{Asymptotics for multifractal conservation laws}, Studia Math. 135 (1999), 321-252.

\bibitem{C-D} J.W. Cholewa, T. Dlotko, \textit{Global Attractors in Abstract Parabolic Problems}, Cambridge University Press, Cambridge, 2000.

\bibitem{C-C-W} P. Constantin, D. Cordoba, J. Wu, \textit{On the critical dissipative Quasi-geostrophic equation}, Indiana  Univ. Math. J. 50 (2001), 97-108.

\bibitem{C-C} A. Cordoba, D. Cordoba, \textit{A Maximum Principle applied to quasi-geostrophic equations}, Commun. Math. Phys. 249 (2004), 511-528.

\bibitem{C-C1} A. Cordoba, D. Cordoba, \textit{A pointwise estimate for fractionary derivatives with applications to partial differential equations},
Proc. Natl. Acad. Sci. USA 100 (2003), 15316-15317.

\bibitem{DL} T. Dlotko, \textit{The generalized Korteweg-de Vries-Burgers equation in $H^2(\R)$}, Nonlinear Anal. TMA 74 (2011), 721-732.

\bibitem{D-S} T. Dlotko, C. Sun, \textit{Dynamics of the modified viscous Cahn-Hilliard equation in $\mathbb{R}^N$}, Topol. Methods Nonlinear Anal. 35 (2010), 277-294.

\bibitem{D-K-S} T. Dlotko, M.B. Kania, C. Sun, \textit{Analysis of the viscous Cahn-Hilliard equation in $\R^N$}, J. Differential Equ. 252 (2012), 2771-2791. 

\bibitem{D-K-S1} T. Dlotko, M.B. Kania, C. Sun, \textit{Pseudodifferential parabolic equations in uniform spaces}, Applicable Anal. 93  (2014), 14-34.  

\bibitem{FR} A. Friedman, \textit{Partial Differential Equations}, Holt, Rinehart and Winston, New York, 1969.  

\bibitem{GH} J.-M. Ghidaglia, \textit{Weakly damped forced Korteweg-de Vries equations behave as a finite dimensional dynamical systems in the long time}, J. Differential Equ. 74 (1988), 369-390. 

\bibitem{HE} D. Henry, \textit{Geometric Theory of Semilinear Parabolic Equations}, Springer-Verlag, Berlin, 1981.

\bibitem{KA} T. Kato, \textit{Perturbation Theory for Linear Operators}, Springer-Verlag, Berlin, 1980. 

\bibitem{KI} A. Kiselev, \textit{Some recent results on the critical surface quasi-geostrophic equation: a review, Hyperbolic problems: theory, numerics and applications},  Proc. Sympos. Appl. Math. Vol. 67, AMS, Providence, RI, 2009, pp. 105-122.

\bibitem{K-N-V} A. Kiselev, F. Nazarov, A. Volberg, \textit{Global well-posedness for the critical $2D$ dissipative quasi-geostrophic equation}, Invent. Math. 167
(2007), 445-453.

\bibitem{KO} H. Komatsu, \textit{Fractional powers of operators}, Pacific J. Math. 19 (1966), 285-346.

\bibitem{KO1} H. Komatsu, \textit{Fractional powers of operators, II. Interpolation spaces}, Pacific J. Math. 21 (1967), 89-111.

\bibitem{L-P} F. Linares, G. Ponce, \textit{Introduction to Nonlinear Dispersive Equations}, IMPA, Rio de Janeiro, 2008.

\bibitem{Li} J.L. Lions, \textit{Quelques m\'ethodes de r\'esolution des probl$\grave{e}$mes aux limites non lin\'eaires}, Dunod Gauthier-Villars, Paris, 1969.

\bibitem{M-S} C. Mart\'inez Carracedo, M. Sanz Alix, \textit{The Theory of Fractional Powers of Operators}, Elsevier, Amsterdam, 2001.

\bibitem{M-S-P} C. Mart\'inez, M. Sanz, F. Periago, \textit{Distributional fractional powers of the Laplacean. Riesz potentials}, Studia Math. 135 (1999), 253-271.

\bibitem{Ra} G. Raugel, Global attractors in partial differential equations, in: B. Fiedler (Ed.), \textit{Handbook of Dynamical
Systems}, Vol. 2, Elsevier, Amsterdam, 2002, pp. 885-982.

\bibitem{RE} S. Resnick, \textit{Dynamical Problems in Non-linear Advective Partial Differential Equations}, Ph.D. Thesis, University of Chicago, 1995.  

\bibitem{S-K-M} S.G. Samko, A.A. Kilbas, O.I. Marichev, {\it Fractional Integrals and Derivatives, Theory and Applications}, Gordon and Breach,  Yverdon, 1993.

\bibitem{S} E.M. Stein, \textit{Singular Integrals and Differentiability Properties of Functions}, Princeton University Press, Princeton, New Jersey, 1970. 

\bibitem{TE1} R. Temam, \textit{Navier-Stokes Equations, Theory and Numerical Analysis}, North-Holland, Amsterdam, 1979. 

\bibitem{TE} R. Temam, \textit{Infinite-Dimensional Dynamical Systems in Mechanics and Physics}, Springer-Verlag, New York, 1997. 

\bibitem{TR} H. Triebel, \textit{Interpolation Theory, Function Spaces, Differential Operators}, Veb Deutscher, Berlin, 1978.

\bibitem{T-M} M. T. Tun, T. Miyakava, \textit{On existence and space-time behavior of dissipative 2D quasi-geostrophic flows}, Funkcial. Ekvac. 53 (2010), 169-212.

\bibitem{WA} B. Wang, \textit{Attractors for reaction-diffusion equations in unbounded domains}, Physica D 128 (1999), 41-52.

\bibitem{W-T} Ming Wang, Yanbin Tang, \textit{On dimension of the global attractor for 2D quasi-geostrophic equations}, Nonlinear Anal. RWA 14 (2013), 1887-1895.

\bibitem{W1} J. Wu, \textit{Dissipative quasi-geostrophic equations with $L^p$ data},  Electron. J. Differential Equations 56 (2001), 1-13.

\bibitem{W2} J. Wu, \textit{The quasi-geostrophic equation and its two regularizations}, Comm. Partial Differ. Equ. 27 (2002), 1161-1181.

\bibitem{Ya} A. Yagi, \textit{Abstract Parabolic Evolution Equations and their Applications}, Springer-Verlag, Heidelberg, 2010.
\end{thebibliography}
\end{document}